\colorlet{sjcolor}{blue}
\colorlet{bscolor}{green}
\colorlet{skcolor}{orange}
\colorlet{akcolor}{brown}
\newcommand{\bscomment}[1]{\textcolor{bscolor}{BS:#1}}
\newcommand{\skcomment}[1]{\textcolor{skcolor}{SK:#1}}
\newcommand{\Omit}[1]{}
\newcommand{\EM}[1]{{\em#1}}
\newcommand{\qkicoloring}[3]{\ensuremath{(#1,#2,#3)}}
\newcommand{\kneser}[2]{\ensuremath{K(#1,#2)}}
\newtheorem{theorem}{Theorem}
\newtheorem{lemma}[theorem]{Lemma}
\newdefinition{definition}[theorem]{Definition}
\newtheorem{proposition}[theorem]{Proposition}
\newproof{proof}{Proof}
\begin{document}

\title{On the Tractability 
of $(k,i)$-Coloring\tnoteref{t1}}
\tnotetext[t1]{This paper is the full version of the article ``On the Tractability 
of $(k,i)$-Coloring''\cite{caldam2018}, published in the CALDAM 2018 conference,
with the same set of authors.}
\author{Sriram Bhyravarapu}
\ead{cs16resch11001@iith.ac.in}
\author{Saurabh Joshi}
\ead{sbjoshi@iith.ac.in}
\author{Subrahmanyam Kalyanasundaram}
\ead{subruk@iith.ac.in}
\author{Anjeneya Swami Kare\corref{}}
\ead{cs14resch01002@iith.ac.in}
\address{Department of Computer Science and Engineering, IIT Hyderabad,\\Telangana - 502 285, India.}

\begin{abstract}
In an undirected graph, a proper $(k,i)$-coloring is an assignment of a set of $k$ colors to each vertex such that any two adjacent vertices 
have at most $i$ common colors. The $(k,i)$-coloring problem is to compute the minimum number of colors required for a 
proper $(k,i)$-coloring.
This is a generalization of the classic graph coloring problem.

We show a parameterized algorithm for the $(k,i)$-coloring problem with the size 
of the feedback vertex set as a parameter. Our algorithm does not use tree-width machinery,
thus answering a question of Majumdar, Neogi, Raman and Tale [CALDAM 2017]. We also give 
a faster and simpler exact algorithm for $(k, k-1)$-coloring.
From the hardness perspective, we
show that the $(k,i)$-coloring problem is NP-complete for  any fixed values $i, k$, whenever $i<k$, thereby settling
a conjecture of M\'{e}ndez-D\'{i}az and Zabala [1999] and again asked by Majumdar, Neogi, Raman and Tale.
The  NP-completeness result improves the partial NP-completeness shown in the preliminary version of 
this paper published in CALDAM 2018.
%
\end{abstract}

\maketitle



\section{Introduction}
\Omit{
Consider an undirected graph $G = (V,E)$ with $|V| = n$. A $(k,i)$-coloring is defined as an assignment of $k$ colors to every vertex such that any two adjacent vertices share at most $i$ colors. $(k,i)-chromatic number$, ${\chi_k^i}$ is the minimum number of colors needed for a proper $(k,i)-coloring$. This $(k,i)$-coloring problem was first introduced by M\'{e}ndez-Diaz and Zabala~\cite{IMDiaz1999}. This problem has been studied on a variety of graph classes, like complete graphs, perfect graphs, etc ~\cite{Bonomo2014}. Recently, Exact algorithms to determine the chromatic number for $(k,0)$-coloring and $(k,k-1)$-coloring have been studied by Diptapriyo Majumdar, Rian Neogi, Venkatesh Raman and Prafullkumar Tale~\cite{Diptapriyo2017}. They have also shown that $(k,i)$-coloring is FPT for Vertex cover and Tree width. There is a polynomial time algorithm to compute the chromatic number of a perfect graph. 

Given a graph $G$ and size of the Feedback Vertex Set $|S|$, we proposed an algorithm to prove that $(k,i)$-coloring is Fixed Parameterized Tractable(FPT) with Feedback Vertex Set(FVS) as a parameter. We also gave a $O(2^n * n^{O(1)})$ algorithm to determine the $(k,k-1)$-chromatic number which is better than known $O(2^n * n^{O(1)})$ algorithm ~\cite{Diptapriyo2017}.
}
In an undirected graph $G = (V, E)$, $|V| = n$, a \emph{proper vertex coloring} is to color the vertices of the graph such that adjacent vertices get different colors. The classic graph coloring problem asks to compute the minimum number of colors required to  properly color the graph. The minimum number of colors required is 
called the \emph{chromatic number} of the graph, denoted by $\chi(G)$. This is a well known NP-hard problem and has been studied in multiple directions.

Many variants and generalizations of the graph coloring problem have been studied in the past. In this paper we address a generalization of the graph coloring problem called $(k,i)$-coloring problem. For a proper $(k,i)$-coloring, we need to assign a set of $k$ colors to each vertex such that the adjacent vertices share at most $i$ colors. The $(k,i)$-coloring problem asks to compute the minimum number of colors required to properly $(k,i)$-color the graph. The minimum number of colors required is  called the $(k,i)$-\emph{chromatic number}, denoted by $\chi_{k}^{i}(G)$. Note that $(1,0)$-coloring is the same as the classic graph coloring problem.

\begin{framed}
\vspace*{-0.10cm}
\noindent $(k,i)$-{\sc Coloring Problem}\\
\textbf{Instance:} An undirected graph $G = (V, E)$. \\
\textbf{Output:} The $(k,i)$-chromatic number of $G$, $\chi_{k}^{i}(G)$.
\vspace*{-0.10cm}
\end{framed}
We also define below the $\qkicoloring{q}{k}{i}$-coloring problem, the decision version of the $(k,i)$-coloring problem.
\begin{framed}
\vspace*{-0.1cm}
\noindent $\qkicoloring{q}{k}{i}$-{\sc Coloring Problem}\\
\textbf{Instance:} An undirected graph $G = (V, E)$. \\
\textbf{Question:} Does $G$ have a proper $(k,i)$-coloring using at most $q$ colors? 
\vspace*{-0.1cm}
\end{framed}
The $(k,i)$-coloring problem was first studied by M\'{e}ndez-D\'{i}az and Zabala in~\cite{IMDiaz1999}. For arbitrary $k$ and $i$, the $(k,i)$-coloring problem is NP-hard because $(1,0)$-coloring is NP-hard. Apart from studying the basic properties, they also gave an integer linear programming formulation of the problem. 
Stahl~\cite{stahl1976} and independently Bollob\'{a}s and 
Thomason~\cite{Bollobas1979} introduced the $(k,0)$-coloring problem under the names of $k$-tuple coloring and $k$-set coloring
respectively. The $k$-tuple coloring problem has been studied in detail~\cite{klostermeyer2002,sparl2007}, and 
Irving~\cite{Irving1983} showed that this problem is NP-hard as well. Some of the applications for the $(k,0)$-coloring problem 
include construction of pseudorandom number generators, randomness extractors, secure password management schemes, aircraft scheduling, biprocessor tasks and frequency assignment to radio stations~\cite{marx2004,calvin2014}.
Brigham and Dutton~\cite{brigham1982} studied another variant of the problem, where $k$ colors have to be assigned 
to each vertex such that the adjacent vertices share exactly $i$ colors. 

Bonomo, Dur\'an, Koch and Valencia-Pabon~\cite{Bonomo2014}
studied the connection between the $(k, i)$-coloring problem on cliques and the theory of error correcting codes. 
In coding theory, a $(j, d, k)$-constant weight code represents a set of codewords of length $j$ with exactly $k$ 
ones in each codeword, with Hamming distance at least $d$. They observed a direct connection between $A(j, d, k)$, 
the largest possible 
size of a $(j, d, k)$-constant weight code, and the $(k,i)$-colorability of cliques and used the existing results from coding theory (such 
as the Johnson bound~\cite{johnson1962})
to infer results on the $(k,i)$-colorability of cliques. Finding bounds on $A(j, d, k)$ is a well-studied problem in coding theory, 
and lots of questions on $A(j, d, k)$ are still open. This indicates the difficulty of the $(k, i)$-coloring problem even on graphs as simple as cliques.

Since the $(k, i)$-coloring problem is NP-hard in general, it is natural to study the tractability for 
special classes of graphs. Polynomial time algorithms are only known for a few of such classes
namely bipartite graphs, cycles, cacti and graphs with bounded vertex cover or tree-width 
\cite{Bonomo2014,Diptapriyo2017}. From the NP-hardness perspective, it is interesting to ask
if the $(k,i)$-coloring problem is NP-hard for specific values of $i$. Except for the cases $i=k$, where 
the problem is trivial, and $i = 0$, where the problem is NP-hard~\cite{Irving1983}, the NP-hardness remains open
for all other values of $i$.

Recently, Majumdar, Neogi, Raman and Tale~\cite{Diptapriyo2017} studied the $(k,i)$-coloring problem 
and gave exact and parameterized algorithms
for the problem. They showed that the problem is fixed parameter tractable (FPT) 
when parameterized by tree-width.
As the tree-width is at most $(|S| + 1)$, where $S$ is a feedback vertex set (FVS) of the graph, their algorithm 
also implies that $(k,i)$-coloring is FPT when parameterized by the size of FVS.
As an open question, they asked to devise an FPT algorithm parameterized by the size of FVS, 
without going through tree-width. 

Our results are:
\begin{itemize}
\item An $O((^q_k)^{|S|+2}n^{O(1)})$ time algorithm for the $\qkicoloring{q}{k}{i}$-coloring problem 
that does not use tree-width machinery, 
Here $S$ is an FVS of the graph.
This implies
an FPT algorithm for the $(k,i)$-coloring problem parameterized by the size of FVS,
and thus answers the question
posed in \cite{Diptapriyo2017}.  

\item We show that the $(k,i)$-coloring problem is NP-complete for any fixed values $k, i$ whenever $i < k$. 
This answers questions posed in~\cite{IMDiaz1999} and~\cite{Diptapriyo2017} and settles the 
complexity of the $(k,i)$-coloring problem for all values of $k$ and $i$.
This result also improves the partial NP-completeness shown in the preliminary version~\cite{caldam2018} of 
this paper.

\item We give a $2^{n}n^{O(1)}$ time exact algorithm for the $(k,k-1)$-coloring problem. This is a direct improvement to the algorithm given in~\cite{Diptapriyo2017} for the same problem.
\end{itemize}

\section{Preliminaries}
\label{prelims}
A \emph{parameterized problem} is a language $B \subseteq \Sigma^* \times \mathbb{N}$ where $\Sigma$ is a fixed, finite alphabet. For example $(x,\ell) \in \Sigma^* \times \mathbb{N}$, here $\ell$ is called the parameter. A parameterized problem $B \subseteq \Sigma^* \times \mathbb{N}$ is called \emph{fixed-parameter tractable} (FPT) if there is an algorithm $\mathcal{A}$, a computable function $f:\mathbb{N} \to \mathbb{N}$, and a constant $c$ such that, given $(x,\ell) \in \Sigma^* \times \mathbb{N}$, the algorithm $\mathcal{A}$ correctly decides whether $(x,\ell)\in B$ in time bounded by $f(\ell)|x|^c$.

We assume that the graph is simple and undirected. We use $n$ to denote $|V|$, the number of 
vertices of the graph.
We say that the vertices $u$ and $v$ are \emph{adjacent} (\emph{neighbors}) 
if $\{u,v\}\in E$. 
For $v \in V$, we let $N(v)$ denote the set of neighbors of $v$.
For $S \subseteq V$, the sub graph induced by $S$ is denoted by $G[S]$. We use $O^{*}(f(n))$ to denote $O(f(n) n^{O(1)})$.
We use the set of natural numbers for coloring the graph. We use the standard notations $[q] = \{1, 2, \ldots, q\}$ and $\binom{[q]}{k}$ to 
denote the set of all $k$-sized subsets of $[q]$.
In the rest of the paper, we use  the term  \emph{coloring} of a set $X \subseteq V$ to denote a mapping $h: X \rightarrow \binom{[q]}{k}$.
We say that $h$ is a \emph{proper} $\qkicoloring{q}{k}{i}$-coloring (or proper $(k,i)$-coloring) of $X$ if any pair of adjacent vertices in $X$ have no more than $i$ colors in common.

\section{$\qkicoloring{q}{k}{i}$-Coloring Parameterized by Size of FVS}
\label{tw}

In this section, we assume that $q, k, i$ are fixed values and focus on the decision problem of
$\qkicoloring{q}{k}{i}$-coloring.
A \EM{Feedback Vertex Set (FVS)} is a set of vertices $S \subseteq V$, removal of which from the graph $G$ makes the
remaining graph ($G[V\setminus S]$) acyclic. 
Many NP-hard problems have been shown to be tractable for graphs with bounded FVS~\cite{kratsch2010}.

In~\cite{Diptapriyo2017}, Majumdar, Neogi, Raman and Tale gave an 
$O((^q_k)^{tw+1} n^{O(1)})$  time\footnote{Even though
\cite{Diptapriyo2017} claims a running time of $O((^q_k)^{tw} n^{O(1)})$ for their algorithm, 
there is an additional factor of $\binom{q}{k}$ that is omitted, presumably because $\binom{q}{k}$ is treated as a constant.}
algorithm for the
$\qkicoloring{q}{k}{i}$-coloring problem,
where $tw$
denotes the tree-width of the graph. 
Let $S$ be a smallest FVS of $G$.
It is known that $tw \leq |S| +1$, see for instance~\cite{jansen2014}. In
this section, we present an algorithm for $\qkicoloring{q}{k}{i}$-coloring that runs in
$O((^q_k)^{|S|+2} n^{O(1)})$ time, where $|S|$ is the size of the FVS of the graph. Our
algorithm does not use the tree-width machinery. Note that, FVS has a $2$-approximation
algorithm~\cite{fvsapprox}, but 
there is no known polynomial time algorithm that approximates tree-width within a constant factor~\cite{inapproxtw}.
Computing the size of the smallest FVS is also known to be FPT parameterized by $|S|$, the size
of the smallest FVS. There has been a series of results improving the running time, the fastest known 
algorithm \cite{fastfvs}
runs in $O(3.619^{|S|} n^{O(1)})$ time.

A brief description of our algorithm follows. Let $S$ be an FVS of $G$. 
We start with a coloring 
of the vertices of $S$. Recall that $G[V\backslash S]$ is a forest. Each of the connected components of 
$G[V\backslash S]$ is a tree. For each of these components, we traverse the tree bottom-up and use a dynamic
programming technique to compute the list of $k$-colorings that each vertex $w \in V\backslash S$ can take. 
For each $C \in \binom{[q]}{k}$,
we include $C$ in $w$'s list if there is a coloring for the 
subtree rooted at $w$, consistent with the coloring of $S$, such that $w$ receives color set $C$.
We repeat this for all proper colorings of $S$.

Let $\Psi = \binom{[q]}{k}$ denote the family of all $k$-sized 
subsets of $[q]$.
For any pair of sets $C, C' \in \Psi$, we say that $(C, C')$ is \emph{legal} if $|C \cap C'| \leq i$, and \emph{illegal} 
if $|C \cap C'| > i$.
Given two sets $C, C' \in \Psi$, it is easy to check if $(C, C')$ is a legal pair. Formally, we have:
\begin{proposition}
\label{prop:legalcheck}
Given $C,C' \in \Psi$, 
it takes $O(k \log k)$ time to check if $(C,C')$ is a legal pair. 
\end{proposition}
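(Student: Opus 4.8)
The plan is to reduce the legality test to counting the size of the intersection $|C \cap C'|$ and then comparing it against the threshold $i$. The only mild subtlety is that the sets $C, C' \in \Psi$ are given to us simply as collections of $k$ natural numbers (with no guarantee of being sorted or stored in any convenient structure), so we must account for the cost of organizing them.

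First I would sort the element list of $C$ and, separately, the element list of $C'$. Since each of $C$ and $C'$ has exactly $k$ elements, each sort costs $O(k \log k)$ time using any comparison-based sorting algorithm. Second, with both lists sorted, I would run a single linear merge pass (the standard two-pointer merge used in merge sort): advance a pointer into each sorted list, and each time the two currently pointed elements are equal, increment a counter and advance both pointers; otherwise advance the pointer at the smaller element. This pass touches each element at most once, so it runs in $O(k)$ time and terminates with the counter equal to $|C \cap C'|$. Finally, compare the counter to $i$: the pair $(C,C')$ is legal if and only if the counter is at most $i$, which is an $O(1)$ comparison.

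Summing the three phases gives $O(k \log k) + O(k) + O(1) = O(k \log k)$ time, as claimed. There is no real obstacle here; the statement is essentially a bookkeeping observation, and the $O(k \log k)$ bound is dominated entirely by the initial sorting step. (If one is willing to pay an additive $O(q)$ preprocessing cost per call to allocate a length-$q$ bit array, one could even bring the test down to $O(k)$ time, but the sorted-merge approach suffices for the stated bound and avoids any dependence on $q$.)
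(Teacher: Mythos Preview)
Your proposal is correct; the paper in fact states this proposition without proof, treating it as a routine observation. The sort-then-merge approach you describe is exactly the standard argument that justifies the $O(k \log k)$ bound, so there is nothing to add.
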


\begin{definition}\label{def:hcompat}
Consider a partial coloring $h: S \rightarrow \Psi$ where only the vertices of the FVS $S$ are colored. 
For a vertex $w \in V \backslash S$ and a set $C \in \Psi$, we say that $(w, C)$ is \emph{$h$-compatible} if
for all $x \in S \cap N(w)$, the pair $(C, h(x))$ is legal.

The set $\{C \in \Psi\;|\; (w, C) \mbox{ is } h\mbox{-compatible}\}$ is defined to be the set of \emph{$h$-compatible colorings of $w$}.
\end{definition}

\begin{proposition}
\label{prop:compat}
Let $h: S \rightarrow \Psi$ be a coloring of the vertices in $S$.
Let $w \in V \backslash S$ and $d_S(w) = |N(w) \cap S|$. 
Then the set of $h$-compatible colorings of $w$ can be computed in time $O\left({q \choose k}d_S(w)k \log k\right)$.
\end{proposition}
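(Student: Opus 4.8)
The plan is to bound the cost of computing the $h$-compatible colorings of $w$ by simply iterating over all candidate color sets $C \in \Psi$ and, for each one, checking $h$-compatibility directly against the definition. First I would observe that by Definition~\ref{def:hcompat}, $(w,C)$ is $h$-compatible exactly when $(C, h(x))$ is legal for every $x \in N(w) \cap S$. So the natural algorithm is: for each $C \in \Psi$, loop over the $d_S(w)$ neighbors $x$ of $w$ that lie in $S$, and for each such $x$ invoke the legality check of Proposition~\ref{prop:legalcheck} on the pair $(C, h(x))$; include $C$ in the output list if and only if all $d_S(w)$ of these checks return ``legal''.

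Next I would tally the running time. There are $|\Psi| = \binom{q}{k}$ choices of $C$. For a fixed $C$, we perform at most $d_S(w)$ legality checks, each costing $O(k \log k)$ time by Proposition~\ref{prop:legalcheck}. Multiplying gives $O\!\left(\binom{q}{k}\, d_S(w)\, k \log k\right)$ total time, which is exactly the claimed bound. One minor point to address is how $N(w) \cap S$ is enumerated; since $|N(w) \cap S| = d_S(w)$ by definition, listing these neighbors can be folded into the per-vertex work (or done once up front) and does not affect the stated asymptotics, so I would mention this only in passing.

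There is really no serious obstacle here — the statement is essentially a bookkeeping consequence of the two preceding propositions and the definition of $h$-compatibility. If anything, the only thing worth being slightly careful about is making sure the legality check in Proposition~\ref{prop:legalcheck} is the right primitive to call (it checks $|C \cap C'| \le i$, which is precisely what ``legal'' means, and that is exactly the condition in Definition~\ref{def:hcompat}), and confirming that we can short-circuit or not without changing the worst-case bound. I would therefore keep the proof to a few lines: state the algorithm, invoke Proposition~\ref{prop:legalcheck} for the inner cost, and multiply through by $\binom{q}{k}$ and $d_S(w)$.
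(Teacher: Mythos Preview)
Your proposal is correct and matches the paper's own proof essentially line for line: iterate over all $C \in \Psi$, for each $C$ test legality against every neighbor of $w$ in $S$ via Proposition~\ref{prop:legalcheck}, and multiply $\binom{q}{k} \cdot d_S(w) \cdot O(k \log k)$. The paper's proof is in fact shorter than yours; the extra remarks about enumerating $N(w)\cap S$ and short-circuiting are fine but unnecessary.
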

\begin{proof}
For each $C \in \Psi$, we check if $(w, C)$ is $h$-compatible. For this, we need to check for all neighbors $x$ of $w$ in $S$, 
whether $(C, h(x))$ is legal. The total running time is $\binom{q}{k} \cdot d_S(w) \cdot O(k \log k)$. 
\qed
\end{proof}

\begin{definition}\label{def:extn}
Given a graph $G = (V, E)$ and a coloring $h: X \rightarrow \Psi$ for some $X \subseteq V$, 
we say that the coloring $h': V \rightarrow \Psi$ 
is an \emph{extension of $h$}, or \emph{extends $h$} if for all $v \in X$, we have $h(v) = h'(v)$.
\end{definition}

\begin{lemma}
Given a proper
$\qkicoloring{q}{k}{i}$-coloring $h$ of the vertices in a feedback vertex set $S$
of the graph $G=(V,E)$, we can
determine if $h$ can be extended to a 
proper
$\qkicoloring{q}{k}{i}$-coloring of $V$ in $O({q \choose k}^2 n^{O(1)})$ time.
\label{lemfvs}
\end{lemma}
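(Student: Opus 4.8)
The plan is to exploit the forest structure of $G[V\backslash S]$ and run a bottom-up dynamic program over its trees, exactly as sketched informally before the lemma. Root each tree of $G[V\backslash S]$ arbitrarily, and for $w \in V\backslash S$ let $T_w$ denote the subtree rooted at $w$. I would compute, for every $w$, a list $L(w)$ defined to be the set of all $C \in \Psi$ such that $G[S \cup V(T_w)]$ admits a proper \qkicoloring{q}{k}{i}-coloring that extends $h$ and assigns $w$ the set $C$. Then $h$ extends to a proper \qkicoloring{q}{k}{i}-coloring of $V$ if and only if $L(w) \neq \emptyset$ for the root $w$ of every tree, since the different trees of $G[V\backslash S]$ interact only through $S$, whose coloring is already fixed by $h$.

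The lists are computed bottom-up. If $w$ is a leaf of its tree, then $G[S \cup V(T_w)] = G[S \cup \{w\}]$; since $h$ is already proper on $G[S]$, a set $C$ works for $w$ exactly when $(w,C)$ is $h$-compatible (Definition~\ref{def:hcompat}), so $L(w)$ is the set of $h$-compatible colorings of $w$, obtained via Proposition~\ref{prop:compat}. If $w$ has children $w_1,\dots,w_d$ and $L(w_1),\dots,L(w_d)$ are already known, I claim that $C \in L(w)$ if and only if (i) $(w,C)$ is $h$-compatible, and (ii) for every $j \in [d]$ there is some $C_j \in L(w_j)$ with $(C,C_j)$ legal. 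The forward implication follows by restricting a witnessing coloring to each $G[S \cup V(T_{w_j})]$. For the converse, pick witnesses $f_j$ for the $C_j$ and glue them together with the assignment $w \mapsto C$; the result is a proper coloring of $G[S \cup V(T_w)]$ because the only edges not already handled inside some $G[S\cup V(T_{w_j})]$ or inside $G[S]$ are the edges $\{w,w_j\}$ (legal by (ii)) and the edges from $w$ to $S$ (legal by (i)) — there is no edge between distinct $V(T_{w_j})$ and $V(T_{w_{j'}})$, nor between $w$ and $V(T_{w_j})\backslash\{w_j\}$, since any such edge would close a cycle in the acyclic graph $G[V\backslash S]$.

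For the running time, the $h$-compatibility checks cost $O\big(\binom{q}{k}\sum_{w} d_S(w)\, k\log k\big) = O\big(\binom{q}{k}\,|E|\,k\log k\big)$ in total by Proposition~\ref{prop:compat}. For condition (ii), processing a vertex $w$ with $d$ children costs, for each $C \in \Psi$ and each child $w_j$, a scan over $L(w_j) \subseteq \Psi$ together with a legality test (Proposition~\ref{prop:legalcheck}), i.e.\ $O\big(\binom{q}{k}^2 k\log k\big)$ per child. Since $\sum_w d < n$, the overall cost is $O\big(\binom{q}{k}^2 n\, k\log k\big) = O\big(\binom{q}{k}^2 n^{O(1)}\big)$, as required.

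The heart of the argument — and essentially the only step needing care — is the gluing in the inductive case: one must confirm that the acyclicity of $G[V\backslash S]$ forces all interaction among the subtrees $T_{w_j}$ to pass through $w$ and through the already-colored set $S$, so that per-child feasibility combines into feasibility for $G[S\cup V(T_w)]$. The leaf case, the ``every root has a nonempty list'' criterion, and the time analysis are all routine once this is in place.
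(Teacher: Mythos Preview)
Your proposal is correct and follows essentially the same approach as the paper: a bottom-up dynamic program over the rooted trees of $G[V\setminus S]$, where your list $L(w)$ is exactly the set $\{C : M_w(C)=1\}$ in the paper's notation, with the same leaf/internal recurrence and the same running-time analysis. If anything, you spell out the gluing correctness argument more explicitly than the paper, which simply notes that correctness follows by induction in post order.
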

\begin{proof} 
The graph $G[V\setminus S]$ is a forest because $S$ is a feedback vertex set.
Therefore each connected component of $G[V\setminus S]$ is a tree.
Below, we describe an algorithm that we can apply to each of these trees to yield a 
proper $\qkicoloring{q}{k}{i}$-coloring extending $h$ for the trees.
Combining the colorings, we get a proper $\qkicoloring{q}{k}{i}$-coloring of $V$, that is an extension of $h$.

Let $T$ denote one of the trees in the forest.
We will designate any one of the vertices (say $r$) of $T$ as root. Let $T_w$ denote
the subtree rooted at a node $w \in T$. 

Our plan is to maintain a table at each vertex $w$, indexed with the elements of $\Psi$.
The entry at each color set $C$ is denoted by $M_w(C)$. The entry $M_w(C)$ indicates 
whether there is a proper $\qkicoloring{q}{k}{i}$-coloring of $T_w$, with $w$ assigned the set $C$,
consistent with the coloring $h$ of  $S$.

We will process $T$ in a post order fashion as follows:
\begin{enumerate}
\item \textbf{When $w$ is a leaf in $T$:} In this case, we set $M_w(C) = 1$ if $(w, C)$ is $h$-compatible. Otherwise, 
we set  $M_w(C) = 0$.

For any leaf $w$, the values $M_w(C)$ corresponding to all $C \in \Psi$ 
can be computed in time $O({q \choose k}d_S(w)k \log k)$
by Proposition \ref{prop:compat}. Here $d_S(w)$ denotes the number of neighbors of $w$ in $S$.
\item \textbf{When $w$ is an internal node in $T$:} 
Let $u_1, u_2, \ldots$ be the children of $w$ in $T$. 
Recall that we process $T$ in post order fashion. 
Before we process $w$,  the $M_{u_j}$ values for all the children of $w$ would already have been computed.
The value $M_w(C)$ is computed as follows:
\begin{itemize}
	\item If $(w, C)$ is not $h$-compatible, we set $M_w(C) = 0$. 
	\item If $(w, C)$ is $h$-compatible, we do the following: 
	\begin{itemize}
		\item[--] If for each child $u_j$ of $w$, there exists at least one coloring $C' \in \Psi$ such that $M_{u_j}(C') = 1$
		 and $(C, C')$ is a legal pair, then set $M_w(C) = 1$.
		\item[--] Otherwise set $M_w(C) = 0$.
	\end{itemize}
\end{itemize}
For each $w$ and $C$, the $h$-compatibility check takes $O(d_S(w) k \log k)$
time. If $(w, C)$ is $h$-compatible, 
we need to check all the children $u_j$, and the table entries $M_{u_j}(C')$ for all $C' \in \Psi$. Together with the
check for $(C, C')$ being a legal pair, the computation takes $d_T(w) \cdot \binom{q}{k} \cdot 
O(k \log k)$
time, where
$d_T(w)$ is the number of children of $w$ in the tree $T$.

Adding all up, the computation of the table entries for $w$ takes time
\begin{equation}\label{mwtime}
O\left(\binom{q}{k} \cdot k \log k \cdot
\left[d_S(w) + d_T(w) \binom{q}{k}\right]\right).
\end{equation}
\end{enumerate}
If for some $C\in \Psi$, $M_r(C)=1$, then we know that there exists a proper $\qkicoloring{q}{k}{i}$-coloring of $T$ 
that is consistent with the coloring $h$ of $S$.

The time complexity is obtained by adding the expression in (\ref{mwtime}) over all the vertices $w \in V\backslash S$. 
By using the bounds $d_S(w) \leq n$ and $\sum_{\text{Trees }T} \sum_{w \in V(T)} d_T(w) \leq \sum_{\text{Trees }T} |V(T)| \leq n$, we get that the time complexity 
is upper bounded by
$$
O\left(\binom{q}{k} \cdot k \log k
\cdot \left[n^2 + n \binom{q}{k}\right]\right),
$$
which is at most $O\left(\binom{q}{k}^2 \cdot n^2\right)$, by noting that $k$ is a constant. 
\qed
\end{proof}
The correctness of the procedure explained in the above lemma can be proved using 
an induction on the vertices of $T$ according to its post order traversal. The inductive claim
says that $M_w(C) = 1$ if and only if there is a proper $\qkicoloring{q}{k}{i}$-coloring of $T_w$, 
with $w$ assigned the set $C$,
consistent with the given coloring of $S$.

\begin{lemma}\label{lem:qkispace}
Given a proper
$\qkicoloring{q}{k}{i}$-coloring $h$ of the vertices in a feedback vertex set $S$
of the graph $G=(V,E)$, we can
determine if $h$ can be extended to a proper $\qkicoloring{q}{k}{i}$-coloring of $V$ with space complexity 
$O({q \choose k} n)$.
\end{lemma}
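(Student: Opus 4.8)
The plan is not to design a new algorithm but to re-examine the dynamic program of Lemma~\ref{lemfvs} and carefully bound the working memory it consumes. Recall that that algorithm processes each tree $T$ of the forest $G[V\setminus S]$ in post order, and at every vertex $w\in V\setminus S$ it maintains a single table $M_w\colon\Psi\to\{0,1\}$ with $|\Psi|=\binom{q}{k}$ Boolean entries. This table is the only object of non-constant size attached to $w$, so storing the tables of all vertices of $V\setminus S$ simultaneously costs $O\!\left(\binom{q}{k}\cdot|V\setminus S|\right)=O\!\left(\binom{q}{k}\,n\right)$ space. Hence it suffices to verify that every other part of the computation fits inside this budget, i.e. needs only $O(n)$ or $O(1)$ additional cells.

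So, first I would account for the remaining working memory item by item. Storing the given coloring $h\colon S\to\Psi$ takes $O(|S|)$ cells (each value is a $k$-subset of $[q]$, which occupies $O(1)$ space since $q,k$ are fixed), hence $O(n)$. A post-order traversal of each tree can be realised iteratively with an explicit stack of size $O(|V(T)|)=O(n)$. Computing one entry $M_w(C)$ uses: the $h$-compatibility test for $(w,C)$, which by Proposition~\ref{prop:compat} merely scans the neighbours of $w$ in $S$ with $O(k)=O(1)$ scratch space; and, when $w$ is internal, a scan over the already-computed children's tables $M_{u_j}$ combined with the legal-pair test of Proposition~\ref{prop:legalcheck}, which again needs only $O(1)$ extra space on top of the tables themselves. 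No step therefore requires more than $O(n)$ memory beyond the $O\!\left(\binom{q}{k}\,n\right)$ held by the tables, and adding everything gives the claimed $O\!\left(\binom{q}{k}\,n\right)$ bound; iterating this routine over all proper colorings $h$ of $S$ only adds an $O(|S|)$ counter, so the full FPT algorithm also runs in polynomial space. (One could in fact do better, by building $M_w$ from its children incrementally and discarding each child's table as soon as it has been consumed in the post-order sweep, but the naive accounting already yields the stated bound.)

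I do not expect a genuine obstacle here: the argument is essentially a disciplined memory audit of an algorithm already proved correct. The only point that deserves attention, and the place I would be most careful, is to confirm that the tree traversal is implemented with an $O(n)$ stack rather than naive recursion that could behave badly, and that the compatibility and legality subroutines are invoked without materialising auxiliary lists — so that the $O\!\left(\binom{q}{k}\,n\right)$ term coming from the DP tables genuinely dominates. I would also note, for cleanliness, that the input graph is treated as read-only, so its (possibly $\Theta(n^2)$) adjacency representation is not charged to the working space.
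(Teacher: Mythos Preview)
Your proposal is correct and follows the same approach as the paper: both rely on the dynamic program of Lemma~\ref{lemfvs} and observe that storing the tables $M_w$ of size $\binom{q}{k}$ at each of the at most $n$ vertices of $V\setminus S$ dominates the space usage. The paper's own proof is a one-line version of exactly this argument, so your more thorough memory audit (accounting for the traversal stack, the coloring $h$, and the scratch space of the subroutines) is if anything more careful than what the paper provides.
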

\begin{proof} 
Recall the algorithm explained in Lemma \ref{lemfvs}.
At each vertex $w$ in $ G[V \setminus S]$, we need $O({q \choose k})$ space to store values $M_w(C)$ for all $C \in \Psi$.
\qed
\end{proof}



\begin{theorem}\label{thm:qki}
The $\qkicoloring{q}{k}{i}$-coloring problem can be solved in time 
$O((^q_k)^{|S|+2} n^{O(1)})$ and $O({q \choose k} n)$ space, where $S$ is 
a feedback vertex set of $G$.
\end{theorem}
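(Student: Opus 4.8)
The plan is to combine Lemma~\ref{lemfvs} and Lemma~\ref{lem:qkispace} with an exhaustive enumeration over all proper $\qkicoloring{q}{k}{i}$-colorings of the feedback vertex set $S$. First I would observe that in any proper $(k,i)$-coloring of $G$, the restriction to $S$ is itself a proper $(k,i)$-coloring of $G[S]$; conversely, any proper $(k,i)$-coloring of $V$ extending a coloring $h$ of $S$ requires $h$ to be proper on $G[S]$. Hence $G$ has a proper $\qkicoloring{q}{k}{i}$-coloring if and only if there is some coloring $h : S \to \Psi$ that is proper on $G[S]$ and that can be extended to a proper $\qkicoloring{q}{k}{i}$-coloring of $V$.

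The algorithm then iterates over all mappings $h : S \to \Psi$. There are $\binom{q}{k}^{|S|}$ such mappings. For each one, I would first check in time $\binom{q}{k}^{|S|}\cdot n^{O(1)}$ (actually $O(|S|^2 k\log k)$, dominated by checking each edge of $G[S]$ for legality via Proposition~\ref{prop:legalcheck}) whether $h$ is a proper $(k,i)$-coloring of $G[S]$; if not, discard it. If $h$ is proper on $S$, invoke the procedure of Lemma~\ref{lemfvs} to test, in $O(\binom{q}{k}^2 n^{O(1)})$ time and $O(\binom{q}{k} n)$ space (Lemma~\ref{lem:qkispace}), whether $h$ extends to a proper $\qkicoloring{q}{k}{i}$-coloring of $V$. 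If any $h$ passes this test, answer ``yes''; otherwise answer ``no''. Correctness is immediate from the equivalence above together with the correctness of Lemma~\ref{lemfvs}. The running time is $\binom{q}{k}^{|S|}$ iterations, each costing $O(\binom{q}{k}^2 n^{O(1)})$, giving $O(\binom{q}{k}^{|S|+2} n^{O(1)})$ total. The space is reused across iterations, so it stays $O(\binom{q}{k} n)$; one also needs $O(|S|)$ words to store the current $h$, which is absorbed.

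The proof is essentially a bookkeeping argument, so there is no deep obstacle; the only point requiring a little care is the space bound. Since we cannot afford to store all $\binom{q}{k}^{|S|}$ colorings of $S$ simultaneously, the enumeration must be done lazily (e.g.\ by incrementing an $|S|$-digit counter in base $\binom{q}{k}$), and the tables $M_w$ from Lemma~\ref{lemfvs} must be freed and rebuilt between iterations so that only one invocation's worth of $O(\binom{q}{k} n)$ space is ever live. I would state this explicitly. The rest — the reduction to colorings of $S$, the per-iteration running time from Lemma~\ref{lemfvs}, and summing over iterations — is routine.
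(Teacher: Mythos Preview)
Your proposal is correct and matches the paper's own proof essentially step for step: enumerate all $\binom{q}{k}^{|S|}$ colorings of $S$, discard those not proper on $G[S]$ in $O(|S|^2 k\log k)$ time, and for each surviving $h$ invoke Lemma~\ref{lemfvs} and Lemma~\ref{lem:qkispace}. Your explicit remark about lazy enumeration and reusing the $M_w$ tables between iterations is slightly more detailed than the paper's space argument, but the content is the same.
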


\begin{proof}
For each coloring assignment $h$ of $S$, 
we first determine if $h$ is a proper $\qkicoloring{q}{k}{i}$-coloring.
This can be done in $O(|S|^2 k \log k)$ time.
Then we determine whether there exists a proper $\qkicoloring{q}{k}{i}$-coloring that extends $h$
in $O((^q_k)^2 . n^{O(1)})$ time by Lemma~\ref{lemfvs}. 
Since there are at most $(^q_k)^{|S|}$ many colorings of $S$, we can determine 
whether there exists a proper $\qkicoloring{q}{k}{i}$-coloring of $G$ in $O((^q_k)^{|S|+2} n^{O(1)})$ time.

We need $O(|S| k \log q)$ space to store the coloring $h$ of $S$. And by Lemma \ref{lem:qkispace},
we need $O({q \choose k} n)$ space to determine if $h$ can be extended to a proper coloring of $G$.
The latter is the dominating term and determines the total space requirement of the algorithm.
\qed
\end{proof}

\noindent\textbf{On generating a proper $\qkicoloring{q}{k}{i}$-coloring.} We 
observe that we can modify Theorem \ref{thm:qki} to obtain
an algorithm that generates a proper $\qkicoloring{q}{k}{i}$-coloring of $G$, if one exists.  
After executing the steps of the algorithm corresponding to Theorem \ref{thm:qki}, 
we traverse the tree in top-down fashion from the root, and find
colorings for each vertex $w \in T$, consistent with its parent, subtree $T_w$ and coloring of $S$.
The latter two are already encoded in $M_w(C)$ value. The asymptotic time and space
complexity are the same as that in Theorem \ref{thm:qki}.

We would like to observe a difference in the space usage of our FPT algorithm 
to the FPT algorithm for $\qkicoloring{q}{k}{i}$-coloring parameterized 
by tree-width in \cite{Diptapriyo2017}.
We note that the algorithm in \cite{Diptapriyo2017} can also be modified similarly to obtain an algorithm that generates 
a proper coloring. However, such an algorithm would require to store all feasible colorings 
at each bag of the tree-decomposition, resulting in a $O((^q_k)^{tw+1})$ space usage
at each bag. Since there are $O(n)$ bags, total space required by the algorithm is $O((^q_k)^{tw+1} n)$,
which is significantly larger than the $O({q \choose k} n)$ space required by our algorithm.

\medskip
\noindent\textbf{Decision vs. search problem.} 
We now note that $\chi_k^i(G) \leq \chi_k^i(G[S]) + \chi_k^i(G[V\backslash S])$.
In the RHS, the first term $\chi_k^i(G[S]) \leq k |S|$ trivially, and the second term 
$\chi_k^i(G[V\backslash S]) \leq 2k -i$ since $G[V \backslash S]$ is a forest. Thus we
have $\chi_k^i(G) \leq k |S| + 2k - i \leq k (|S| + 2)$.

We note that we could run the algorithm for
$\qkicoloring{q}{k}{i}$-coloring 
and perform binary search between 1 and $k (|S| + 2)$ and determine 
$\chi_k^i(G)$, the smallest 
$q$ for which the graph has a proper $\qkicoloring{q}{k}{i}$-coloring. 
The running time of this procedure would be at most $\log (k (|S|+2))$
times the running time of the algorithm presented in Theorem \ref{thm:qki},
that is 
$\log (k(|S| +2)) \cdot O(\binom{k (|S| +2)}{k}^{|S|+2} n^{O(1)})$.

Thus the FPT algorithm parameterized 
by the size of the FVS for the $\qkicoloring{q}{k}{i}$-coloring problem implies an 
FPT algorithm parameterized by the size of the FVS for the $(k,i)$-coloring problem as well.

\subsection{Counting all proper $\qkicoloring{q}{k}{i}$-colorings}
Here we show that we can modify the algorithm described in Lemma \ref{lemfvs} to count 
the number of proper $\qkicoloring{q}{k}{i}$-colorings of $G$.
Let a proper $\qkicoloring{q}{k}{i}$-coloring $h$ of FVS $S$ be given. 
Instead of maintaining $M_w(C)$ for a vertex $w$ in a rooted tree $T$, we maintain another value $M^\#_w(C)$.


\begin{equation*}
 M^\#_w(C)= \left\{ \begin{aligned}[lr]
    &0 & \text{if $(w, C)$ is not $h$-compatible.} \\
    &1 & \left\{ \begin{aligned}[l] & \text{if $w$ is a leaf,} \\ & \text{and $(w, C)$ is $h$-compatible.} \end{aligned} \right.\\
    &\prod\limits_{\forall u_j \in \text{{\scriptsize child}}(w)}     {\sum\limits_{{\text{{\scriptsize legal}}(C,C')}} M^\#_{u_j}(C')} & 
    \left\{ \begin{aligned}[l] &  \text{if $w$ is a non-leaf vertex,} \\ & \text{and $(w, C)$ is $h$-compatible.} \end{aligned} \right.\\
  \end{aligned} \right.
\end{equation*}


At each vertex $w$, $M^\#_w(C)$ maintains a count of the proper $\qkicoloring{q}{k}{i}$-colorings 
of $T_w$, consistent with the coloring $h$ of $S$, where $w$ gets assigned the set $C$.
The correctness can be verified by a straightforward induction on the tree vertices in post order traversal.
If $r$ is the root of $T$, $M^\#_r(C)$ gives the count of proper $\qkicoloring{q}{k}{i}$-colorings of $T$, where $r$ 
is colored $C$, consistent
with the coloring $h$ of $S$.

The total number of proper $\qkicoloring{q}{k}{i}$-colorings of $G$ is therefore computed by taking into
account (i) all proper $\qkicoloring{q}{k}{i}$-colorings $h$ of $S$, (ii) all the trees $T_j$ in 
$G[V\backslash S]$, and (iii) all color sets $C \in \Psi$ at the root of $T_j$.
The full expression is as follows:
\begin{equation*}
\text{No. of proper } \qkicoloring{q}{k}{i}\text{-colorings }= 
\sum_{\substack {\text{{\scriptsize proper }}\qkicoloring{q}{k}{i}\text{{\scriptsize-}}\\ \text{{\scriptsize colorings of }} S}} 
\left(
\prod\limits_{\text{{\scriptsize  }} T_j \text{{\scriptsize  in }} G[V \backslash S]} 
\left( \sum\limits_{C\in \Psi}{M^{\#}_{\text{{\scriptsize root(}}T_j\text{{\scriptsize )}} }}(C) \right)
\right).\end{equation*}
The above expression implies the following theorem. The asymptotic time complexity remains the same as 
Theorem \ref{thm:qki}, whereas the space complexity incurs a blowup of $n k \log q$, because of the maximum
value $M^\#_w(C)$ can take.

\begin{theorem}\label{thm:qkicount}
There is an algorithm that computes the number of proper $\qkicoloring{q}{k}{i}$-colorings 
of $G$, in $O((^q_k)^{|S|+2} n^{O(1)})$ time and $O({q \choose k} n^2 \log q)$ space, where $S$ is 
a feedback vertex set of $G$.
\end{theorem}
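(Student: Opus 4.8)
The plan is to run, essentially verbatim, the dynamic program of Lemma~\ref{lemfvs}, but replacing the Boolean table $M_w(C)$ by the integer table $M^\#_w(C)$ defined above. Fix a candidate assignment $h\colon S\to\Psi$; there are at most $\binom{q}{k}^{|S|}$ of these, and checking whether $h$ is a proper $\qkicoloring{q}{k}{i}$-coloring of $S$ costs $O(|S|^2 k\log k)$ time, exactly as in Theorem~\ref{thm:qki}. For each proper $h$, process every tree $T_j$ of the forest $G[V\setminus S]$ in post order, computing $M^\#_w(C)$ for every vertex $w$ and every $C\in\Psi$ from the displayed recurrence: $0$ when $(w,C)$ is not $h$-compatible, $1$ at an $h$-compatible leaf, and $\prod_{u_j\in\mathrm{child}(w)}\sum_{C'\,:\,\mathrm{legal}(C,C')} M^\#_{u_j}(C')$ at an $h$-compatible internal vertex. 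Then combine according to the displayed formula: sum $M^\#_{\mathrm{root}(T_j)}(C)$ over $C\in\Psi$ for each tree, take the product over all trees, and sum over all proper $h$.

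Correctness is an induction on the vertices of $T_j$ in post order, parallel to the correctness argument for $M_w$ in Lemma~\ref{lemfvs}. The inductive claim is that $M^\#_w(C)$ equals the number of proper $\qkicoloring{q}{k}{i}$-colorings of $T_w$ that assign $C$ to $w$ and are consistent with the coloring $h$ of $S$. The key structural point is that the subtrees $T_{u_1},T_{u_2},\dots$ rooted at the children of $w$ are pairwise vertex-disjoint and have no edges between them, so once $w$ is fixed to $C$, a proper $h$-consistent coloring of $T_w$ decomposes uniquely into independent choices, one per child subtree: for each $u_j$ one may choose any proper $h$-consistent coloring of $T_{u_j}$ whose color on $u_j$ forms a legal pair with $C$, and by the induction hypothesis the number of such choices is exactly $\sum_{C'\,:\,\mathrm{legal}(C,C')} M^\#_{u_j}(C')$. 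The product over children therefore counts the colorings of $T_w$ with no over- or under-counting. Summing over $C\in\Psi$ at the root counts the $h$-consistent proper colorings of $T_j$; the product over trees counts the $h$-consistent proper colorings of $G[V\setminus S]$ (again because distinct trees share neither vertices nor edges, and every edge of $G$ with both endpoints outside $S$ lies within a single tree); and since each proper coloring of $G$ restricts to exactly one proper coloring $h$ of $S$ and is counted only in that $h$-term, the outer sum over $h$ gives the total without double counting. This is precisely the claimed expression.

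For the time bound, the combinatorial shape of the computation — number of table entries, number of legality checks, post-order traversals, outer loop over colorings of $S$ — is identical to that analysed in Lemma~\ref{lemfvs} and Theorem~\ref{thm:qki}, so the number of arithmetic operations is $O\!\left(\binom{q}{k}^{|S|+2} n^{O(1)}\right)$. The only genuinely new point is the bit-length of the integers: $G$ has at most $\binom{q}{k}^{n}$ colorings, so every $M^\#_w(C)$ and every intermediate sum or product has $O(n\log\binom{q}{k}) = O(nk\log q)$ bits, and each addition or multiplication of such numbers costs $\mathrm{poly}(n)$ time (any schoolbook estimate suffices), which is absorbed into the $n^{O(1)}$ factor; hence the asymptotic time matches Theorem~\ref{thm:qki}. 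For space, we keep, per vertex of $G[V\setminus S]$, one entry $M^\#_w(C)$ for each $C\in\Psi$, i.e.\ $O(\binom{q}{k})$ integers of $O(nk\log q)$ bits each, giving $O\!\left(\binom{q}{k}\, n\cdot nk\log q\right) = O(\binom{q}{k} n^2\log q)$ with $k$ treated as a constant, plus $O(|S|k\log q)$ to store $h$, which is dominated. I expect the only real subtlety to be pinning down the independence-across-children argument that justifies the product form of the recurrence (and the analogous independence across trees and across choices of $h$); the big-integer arithmetic and the space accounting are routine, and the rest is a re-run of the analysis already carried out for Theorem~\ref{thm:qki}.
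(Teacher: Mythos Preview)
Your proposal is correct and follows essentially the same approach as the paper: replace the Boolean table $M_w(C)$ by the integer table $M^\#_w(C)$, combine via the same product-over-trees and sum-over-$h$ formula, argue correctness by post-order induction, and account for the $O(nk\log q)$ bit-length of the counters to get the stated time and space bounds. In fact you spell out the independence-across-subtrees justification and the big-integer arithmetic more carefully than the paper does, which merely asserts that correctness is a ``straightforward induction'' and that the space blows up by a factor of $nk\log q$.
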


\section{Faster Exact Algorithm for $(k,k-1)$-coloring}\label{sec:kk}

The article~\cite{Diptapriyo2017} gave an $O^*(4^n)$ time exact algorithm for the $(k,k-1)$-coloring problem. 
Their algorithm was based on running an exact algorithm for a set cover instance
where the universe is the set of all the vertices $V$ and the family of sets $\mathcal F$ is
the set of all independent sets of vertices of $G$.
To show correctness and running time, they used a claim (unnumbered) that relates $\chi_k^{k-1}(G)$
to the size of solution of the set cover instance, 
an $O(2^n\cdot n\cdot  |\mathcal F|)$ time exact algorithm for the set cover problem~\cite{Fomin2010} and 
an upper bound of  $2^n$ on the size of the family of sets $\mathcal F$.
Hence, the time complexity of their algorithm is $O(2^n\cdot n\cdot 2^n) = O(4^n\cdot n)$.

We first note that their algorithm also works when $\mathcal F$ is replaced by
$\mathcal F'$, the set of all maximal independent sets of $G$.
This is because any independent set $A \in \mathcal F$ is contained in a maximal independent set
$A' \in \mathcal F'$. In any set covering of $V$ using elements of $\mathcal F$, each set $A$ can be replaced by 
an $A'\in \mathcal F'$, thus obtaining a set cover of $V$ using elements of only $\mathcal F'$.
By using the $3^{n/3}$ upper bound of Moon and Moser \cite{moonmoser} on the
number of maximal independent sets, the time complexity improves to $O(2^n\cdot n\cdot 3^{n/3}) 
= O(2.88^n \cdot n)$. 

We now present a 
simpler and faster
$O^*(2^n)$ algorithm to determine $\chi_k^{k-1}(G)$.

\begin{lemma}\label{lemk,k-1}
For any graph $G$, $\chi_k^{k-1}(G) = q$ where $q$ is the smallest integer such that $\binom{q}{k} \geq \chi_1^0(G)$. 
Thus there is a polynomial time reduction from the $(k, k-1)$-coloring problem to the $(1, 0)$-coloring problem.
\end{lemma}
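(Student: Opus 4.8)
The plan is to establish the identity $\chi_k^{k-1}(G) = q$ by exhibiting a bijective correspondence between proper $(k,k-1)$-colorings of $G$ using a color palette $[q]$ and proper (ordinary) colorings of $G$ using a palette whose colors are the $k$-subsets of $[q]$. The key observation is that in a $(k,k-1)$-coloring, two adjacent vertices are allowed to share at most $k-1$ colors, which for $k$-sized sets is equivalent to saying they are \emph{not assigned the same $k$-set}. So if we treat each element of $\binom{[q]}{k}$ as a single ``super-color'', a proper $(k,k-1)$-coloring of $G$ with $q$ colors is exactly a proper $1$-coloring of $G$ using $\binom{q}{k}$ super-colors, and vice versa.

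First I would make this equivalence precise in both directions. Given a proper $(k,k-1)$-coloring $h : V \to \binom{[q]}{k}$, note that for every edge $\{u,v\}$ we have $|h(u)\cap h(v)| \le k-1$, hence $h(u)\ne h(v)$ (two distinct $k$-element sets can intersect in at most $k-1$ elements, and two equal ones intersect in $k$). So $h$, viewed as a map into the set $\binom{[q]}{k}$ of available colors, is a proper vertex coloring of $G$ using at most $\binom{q}{k}$ colors, giving $\chi_1^0(G) \le \binom{q}{k}$ whenever a $(k,k-1)$-coloring with $q$ colors exists. Conversely, suppose $\chi_1^0(G) \le \binom{q}{k}$. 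Fix any injection $\phi$ from a set of $\chi_1^0(G)$ color labels into $\binom{[q]}{k}$; composing a proper coloring $c : V \to [\chi_1^0(G)]$ of $G$ with $\phi$ yields a map $h = \phi \circ c : V \to \binom{[q]}{k}$ such that adjacent vertices get distinct $k$-sets, hence share at most $k-1$ colors. Thus $h$ is a proper $(k,k-1)$-coloring of $G$ with $q$ colors. Together these show: $G$ has a proper $(k,k-1)$-coloring with $q$ colors if and only if $\binom{q}{k} \ge \chi_1^0(G)$.

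Next I would conclude the exact formula. The function $q \mapsto \binom{q}{k}$ is nondecreasing in $q$ (for $q \ge k$, strictly increasing, and $\binom{q}{k}=0$ for $q<k$). Hence the set of $q$ for which $G$ admits a $(k,k-1)$-coloring with $q$ colors is exactly $\{q : \binom{q}{k}\ge \chi_1^0(G)\}$, an ``upward-closed'' set of integers, and its minimum is precisely the smallest $q$ with $\binom{q}{k}\ge\chi_1^0(G)$. That minimum is $\chi_k^{k-1}(G)$ by definition. For the algorithmic consequence, computing $\chi_1^0(G)$ is one call to an algorithm for ordinary graph coloring, and then one finds the smallest $q$ with $\binom{q}{k}\ge\chi_1^0(G)$ by a simple search over $q$ (the answer is $O(\chi_1^0(G)^{1/k}\cdot k) = O(n)$, so even a linear scan is polynomial, and each binomial-coefficient comparison against $\chi_1^0(G)\le n$ costs polynomial time); this is the claimed polynomial-time reduction.

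I do not expect a genuine obstacle here; the only point that needs a word of care is the elementary set-theoretic fact that for $k$-element sets $A,B$ one has $|A\cap B|\le k-1 \iff A\ne B$, which is immediate, and the monotonicity of $\binom{q}{k}$ in $q$, which guarantees that ``smallest $q$ with $\binom{q}{k}\ge\chi_1^0(G)$'' is well-defined and coincides with the $(k,k-1)$-chromatic number rather than merely bounding it. One should also note in passing that $\chi_1^0(G)\ge 1$ always (assuming $V\ne\emptyset$), so such a $q$ exists.
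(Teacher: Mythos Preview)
Your proof is correct and follows essentially the same approach as the paper: both observe that for $k$-element sets, sharing at most $k-1$ colors is equivalent to being distinct, so a proper $(k,k-1)$-coloring from $[q]$ is exactly a proper ordinary coloring using the $\binom{q}{k}$ many $k$-subsets as ``super-colors''. Your write-up is simply more detailed than the paper's, explicitly spelling out both directions of the equivalence, the monotonicity of $\binom{q}{k}$, and the polynomial-time computation of the smallest $q$.
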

\begin{proof}
The $(k, k-1)$-coloring problem asks to assign sets of $k$ colors to each vertex, with the requirement 
that neighboring vertices must have distinct sets assigned to them. We may view each of the $k$-sized 
subsets
as a color, and the $(1,0)$-chromatic number $\chi_1^0(G)$ is the number of distinct $k$-sized subsets required.

Thus $\chi_k^{k-1}(G)$ is the smallest $q$ that will provide $\chi_1^0(G)$ number of $k$-sized subsets.
The polynomial time reduction is immediate.
\qed
\end{proof}

Combining the above lemma with the $O^*(2^n)$ time algorithm of Koivisto \cite{Koivisto} to compute $\chi_1^0(G)$, we
get the following theorem.

\begin{theorem}
There is an algorithm with $O^*(2^n)$ time complexity that computes the $(k,k-1)$ chromatic number of a given graph.
\end{theorem}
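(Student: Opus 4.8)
The plan is to simply combine Lemma~\ref{lemk,k-1} with an off-the-shelf exact algorithm for the classical chromatic number. By Lemma~\ref{lemk,k-1}, $\chi_k^{k-1}(G)$ is precisely the smallest integer $q$ with $\binom{q}{k} \geq \chi_1^0(G)$. Hence it suffices to (i)~compute $\chi_1^0(G)$, and (ii)~determine the smallest $q$ satisfying $\binom{q}{k} \geq \chi_1^0(G)$. The output of step (ii) is, by the lemma, exactly $\chi_k^{k-1}(G)$, so correctness is entirely inherited and nothing new needs to be argued there.

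For step (i), I would run Koivisto's inclusion--exclusion based algorithm~\cite{Koivisto}, which computes $\chi_1^0(G)$ in $O^*(2^n)$ time. For step (ii), note that $\chi_1^0(G) \leq n$, so the target value is at most $n$; consequently the smallest $q$ with $\binom{q}{k} \geq \chi_1^0(G)$ is polynomially bounded (indeed $q \leq k + n$ crudely, since $\binom{k+m}{k} \geq m$). I would therefore scan $q = k, k+1, k+2, \dots$, computing $\binom{q}{k}$ (or just tracking it incrementally) and stopping at the first $q$ that meets the bound; this is a polynomial-time post-processing step. Adding the two costs, the total running time is $O^*(2^n) + n^{O(1)} = O^*(2^n)$.

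There is no real obstacle here: the theorem is a direct corollary of the reduction established in Lemma~\ref{lemk,k-1} together with the best known algorithm for $\chi_1^0$. The only point that needs a sentence of care is confirming that the search in step (ii) is cheap --- i.e.\ that $q$ is polynomially bounded and each comparison $\binom{q}{k} \geq \chi_1^0(G)$ takes polynomial time --- so that it is dominated by the $O^*(2^n)$ cost of computing $\chi_1^0(G)$.
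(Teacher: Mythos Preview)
Your proposal is correct and matches the paper's approach exactly: the paper simply states that the theorem follows by combining Lemma~\ref{lemk,k-1} with Koivisto's $O^*(2^n)$ algorithm for $\chi_1^0(G)$. Your added detail about the polynomial-time search for the smallest $q$ with $\binom{q}{k} \geq \chi_1^0(G)$ is implicit in the lemma's ``polynomial time reduction'' claim and is fine to spell out.
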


Further, we can infer from Lemma \ref{lemk,k-1} that for those graphs $G$  where we can compute $\chi _{1}^{0}$(G) 
in polynomial time, $\chi _{k}^{k-1}$($G$) can also be found in polynomial time. For instance, $\chi _{k}^{k-1}(K_n)$  can be computed in polynomial time as $\chi _{1}^{0}(K_n)$ = $n$.

\section{NP-completeness results}
Since the $(k,i)$-coloring problem is a generalization of the $(1,0)$-coloring problem, 
it follows that $(k, i)$-coloring is NP-hard in general.
Notice that we have $\chi_k^k(G) = k$ for all graphs $G$. Thus the 
$(k,k)$-coloring problem is trivial. 
M\'{e}ndez-D\'{i}az and Zabala  \cite{IMDiaz1999} conjectured that
the $(k, i)$-coloring problem is NP-hard whenever $i<k$.
In this section, we prove their conjecture by
showing that the  $(k, i)$-coloring problem is NP-complete for all values
of $k$ and $i$, as long as $i<k$.
Given a coloring, we can easily verify
that it is a proper $(k,i)$-coloring in polynomial time.
Therefore, we will only be proving the NP-hardness aspect of NP-completeness.

\subsection{Simple proofs for $(k,1)$-coloring and $(k, k-1)$-coloring}
In this section, we provide simple proofs 
for the NP-completeness of $(k,1)$-coloring and $(k, k-1)$-coloring.
For the $(k,0)$-coloring problem,
we have the following result by Irving. 

\begin{theorem}[$(k, 0)$-coloring is NP-complete \cite{Irving1983}]\label{thm:kzero}
The $\qkicoloring{2k+1}{k}{0}$-coloring problem is NP-complete for all $k \geq 1$.
\end{theorem}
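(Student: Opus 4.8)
Membership in NP is immediate: a certificate is an assignment of a $k$-subset of $[2k+1]$ to each vertex, and one checks in polynomial time that adjacent vertices receive disjoint subsets. So I would concentrate on NP-hardness, and the first move is a reformulation: a proper $\qkicoloring{2k+1}{k}{0}$-coloring of $G$ is exactly a graph homomorphism from $G$ to the Kneser graph $\kneser{2k+1}{k}$, whose vertices are the $k$-subsets of $[2k+1]$ and whose edges join disjoint subsets. By Lov\'asz's theorem $\chi(\kneser{2k+1}{k}) = (2k+1)-2k+2 = 3$, so $\kneser{2k+1}{k}$ admits a homomorphism to $K_3$; in particular every graph that maps to it is $3$-colorable. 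For $k=1$ we have $\kneser{3}{1} = K_3$, so $\qkicoloring{3}{1}{0}$-coloring is literally $3$-{\sc Colorability}, which is NP-complete; this settles the base case.

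For $k \geq 2$ the plan is to reduce $3$-{\sc Colorability} to $\qkicoloring{2k+1}{k}{0}$-coloring by a gadget construction. Fix an order-$3$ automorphism $\sigma$ of $\kneser{2k+1}{k}$ — for instance the $3$-cycle $(1\,2\,3)$ acting on $[2k+1]$ — and pick a vertex $A_0$ whose $\sigma$-orbit $\{A_0,\sigma A_0,\sigma^2 A_0\}$ has size $3$; these three $k$-subsets will play the role of the three ``logical colors.'' Given a graph $G$, I would build $G^\ast$ from (i) a \emph{vertex gadget} $W_v$ for each $v \in V(G)$, with a distinguished port $p_v$, such that in every homomorphism $W_v \to \kneser{2k+1}{k}$ the image of $p_v$ lies in $\{A_0,\sigma A_0,\sigma^2 A_0\}$ and all three values are attainable; and (ii) an \emph{edge gadget} $E_{uv}$ for each $uv \in E(G)$, glued at $p_u$ and $p_v$, that admits a homomorphism to $\kneser{2k+1}{k}$ extending a given assignment of $p_u,p_v$ into $\{A_0,\sigma A_0,\sigma^2 A_0\}$ if and only if the two assigned values differ. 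With such gadgets, $G^\ast \to \kneser{2k+1}{k}$ precisely when $G$ is $3$-colorable, and since $k$ is fixed and each gadget has size polynomial in $k$, $G^\ast$ has size polynomial in $|G|$, completing the reduction.

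The main obstacle is constructing and verifying the two gadgets, which is where essentially all the combinatorial content lies; the rest (the size bound and the two directions of correctness) is bookkeeping. The gadgets must exploit the structure of $\kneser{2k+1}{k}$: it is connected, vertex-transitive with automorphism group $S_{2k+1}$, non-bipartite of odd girth $2k+1$, and a core. For the vertex gadget, one natural route is an indicator/sub-indicator construction (in the style of the Hell--Ne\v{s}et\v{r}il machinery) assembled from copies of $\kneser{2k+1}{k}$ itself, wired so that the port is pinned to a $3$-element orbit. For the edge gadget, note that $A_0,\sigma A_0,\sigma^2 A_0$ cannot be made pairwise adjacent — three pairwise disjoint $k$-sets do not fit in $[2k+1]$ when $k \geq 2$ — so a single edge will not do; instead one must route the port-to-port connection through paths or small subgraphs whose admissible endpoint pairs are exactly the ``unequal'' ones, using the parity and walk-length structure of $\kneser{2k+1}{k}$. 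An alternative worth trying is an induction on $k$, reducing $\qkicoloring{2k-1}{k-1}{0}$-coloring to $\qkicoloring{2k+1}{k}{0}$-coloring; but since the odd girths $2k-1$ and $2k+1$ of the two Kneser graphs differ, engineering an answer-preserving construction between them looks at least as delicate as the gadget approach above.
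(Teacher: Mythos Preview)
The paper does not itself prove this theorem; it is quoted from Irving. It does, however, prove a strict generalization later (Theorem~\ref{thm:npcing}, whose case $i=1$ is exactly the present statement), and that proof proceeds by a direct reduction from $3$-\textsc{CNFSAT}: one attaches a copy of $\kneser{2k+i}{k}$ for each variable and for each clause, all sharing a common block $A$ of vertices, uses the rigidity of Kneser colorings (Theorem~\ref{thm:natural}, via Erd\H{o}s--Ko--Rado and Hilton--Milner) to pin the coloring of $A$ up to a permutation, designates two colors as ``true'' and ``false'', and encodes each clause with a totally independent $3$-set in its Kneser copy. That is a quite different route from yours.

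Your plan --- reduce from $3$-\textsc{Colorability} via Hell--Ne\v{s}et\v{r}il-style vertex and edge gadgets --- is reasonable in outline, but as written it has a genuine gap: you yourself say that ``essentially all the combinatorial content lies'' in constructing the two gadgets, and then you do not construct them. Phrases like ``wired so that the port is pinned to a $3$-element orbit'' and ``route the port-to-port connection through paths or small subgraphs whose admissible endpoint pairs are exactly the unequal ones'' are hopes, not constructions; you give no candidate subgraph and no argument that one with the required endpoint constraints exists inside $\kneser{2k+1}{k}$. If you want to push this direction, the honest shortcut is to invoke the Hell--Ne\v{s}et\v{r}il dichotomy directly: $\kneser{2k+1}{k}$ is non-bipartite (odd girth $2k+1$), hence $H$-coloring for $H=\kneser{2k+1}{k}$ is NP-complete, and you are done in one line --- though this postdates Irving's original argument. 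If instead you want a self-contained gadget proof, the paper's SAT reduction is much more concrete, precisely because the rigidity theorem for Kneser colorings does the ``pinning'' work that your unspecified indicator construction would have to carry out by hand.
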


The NP-completeness of the $(k,k-1)$-coloring problem is claimed by \cite{IMDiaz1999}. However, 
we are unable to follow and verify the proof. We provide an alternate NP-hardness proof 
as a consequence of the correspondence in Lemma \ref{lemk,k-1}.

\begin{theorem}\label{thm:npkminusone}
The $(k, k-1)$-coloring problem is NP-complete for all $k \geq 1$.
\end{theorem}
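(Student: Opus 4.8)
The plan is to obtain NP-hardness directly from the exact correspondence recorded in Lemma~\ref{lemk,k-1}, namely that $\chi_k^{k-1}(G)$ is the least integer $q$ with $\binom{q}{k}\ge\chi_1^0(G)$. Membership in NP is clear (a given assignment of $k$-sized sets to the vertices can be checked for properness in polynomial time), so I will concentrate on NP-hardness. Since $(k,k-1)$-coloring is a generalization of $(1,0)$-coloring only in a direction that does not obviously help, I would instead reduce from $m$-\textsc{Colorability} --- decide whether $\chi_1^0(G)\le m$ --- which is NP-complete for every fixed $m\ge 3$, targeting the decision problem $\qkicoloring{q}{k}{k-1}$-coloring.

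Concretely, for a fixed $k\ge 1$ I would set $q:=k+2$ and $m:=\binom{q}{k}=\binom{k+2}{2}=\tfrac{(k+1)(k+2)}{2}$, noting that $m\ge 3$ for every $k\ge 1$ (when $k=1$ one has $m=3$, so the construction is simply ordinary $3$-coloring). The reduction maps a graph $G$ to the instance $(G,q)$ of $\qkicoloring{q}{k}{k-1}$-coloring, which is computable in polynomial time since $q=k+2$ is a constant. The verification step is then short: because $\binom{q'}{k}$ is increasing in $q'$ for $q'\ge k$, Lemma~\ref{lemk,k-1} gives $\chi_k^{k-1}(G)\le q$ if and only if $\binom{k+2}{k}\ge\chi_1^0(G)$, i.e.\ if and only if $G$ is $m$-colorable. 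Hence $(G,q)$ is a yes-instance of $\qkicoloring{q}{k}{k-1}$-coloring exactly when $G$ is a yes-instance of $m$-\textsc{Colorability}; this establishes NP-hardness, and together with membership in NP, NP-completeness.

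I do not expect a genuine obstacle here; the one point that needs a little care is that for general $k$ there is typically no $q$ with $\binom{q}{k}=3$ exactly (for instance $\binom{q}{3}$ runs through $1,4,10,\dots$ and skips $3$), so rather than trying to hit $3$ I would pick the smallest convenient value $q=k+2$, for which $\binom{q}{k}=\binom{k+2}{2}\ge 3$, and appeal to the NP-completeness of $m$-colorability for that specific $m$. A stylistic alternative is to reduce from $(k+1)$-coloring (since $\binom{k+1}{k}=k+1\ge 3$) for $k\ge 2$ and treat $k=1$ separately, but the uniform choice $q=k+2$ avoids the case split and is cleaner.
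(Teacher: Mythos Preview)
Your proposal is correct and follows essentially the same route as the paper: both invoke Lemma~\ref{lemk,k-1} to translate $\qkicoloring{q}{k}{k-1}$-colorability into $\binom{q}{k}$-colorability in the classical sense and then appeal to the NP-completeness of $m$-\textsc{Colorability} for $m\ge 3$. The only cosmetic difference is that the paper states the conclusion for all $q>k\ge 2$ (handling $k=1$ as the base case of classical coloring), whereas you fix the single convenient value $q=k+2$ to get a uniform argument for every $k\ge 1$.
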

\begin{proof}
We use reductions from the $(1, 0)$-coloring problem, for each value of $k \geq 2$.
We show that
the $\qkicoloring{q}{k}{k-1}$-coloring problem is NP-complete for all values of $q > k \geq 2$.
From the correspondence in Lemma \ref{lemk,k-1}, it follows that for any given $k \geq 1$, 
a graph $G$ is $\qkicoloring{q}{k}{k-1}$-colorable if and only if $G$ is
$\qkicoloring{\binom{q}{k}}{1}{0}$-colorable.
Since the $\qkicoloring{r}{1}{0}$-coloring problems are NP-complete for all $r \geq 3$, 
it follows that $\qkicoloring{\binom{q}{k}}{1}{0}$-coloring problems are NP-complete for all $q >k \geq 2$, and hence
we get that the $\qkicoloring{q}{k}{k-1}$-coloring problems are NP-complete for all $q > k \geq 2$.
\qed
\end{proof}


The following lemmas will help us in proving further NP-completeness results.

\begin{lemma}[Complement trick]\label{lem:comp}
For integers $k, i\geq 1$, any graph $G$ is $\qkicoloring{2k+i}{k+i}{i}$-colorable if and only if
it is $\qkicoloring{2k+i}{k}{0}$-colorable.
\end{lemma}
\begin{proof}
Let $f: V \rightarrow \binom{[2k + i]}{k}$ be a $\qkicoloring{2k+i}{k}{0}$-coloring of $G$. Consider the coloring $f'$ where each 
vertex $v$ is assigned the complement set $[2k + i] \backslash f(v)$. Notice that, every vertex is assigned $(k+i)$ colors, 
and any pair of adjacent vertices will share exactly $i$ colors in the coloring $f'$. Thus we have a $\qkicoloring{2k+i}{k+i}{i}$-coloring
of $G$.

Similarly, if we start from a $\qkicoloring{2k+i}{k+i}{i}$-coloring of $G$, we can get to a $\qkicoloring{2k+i}{k}{0}$-coloring 
by taking the complement coloring. 
\qed
\end{proof} 

Theorem \ref{thm:kzero} and the above lemma together imply the NP-completeness of $(k, 1)$-coloring
for all $k \geq 2$.

\begin{theorem}[$(k, 1)$-coloring is NP-complete]\label{thm:kone}
The $\qkicoloring{2k+1}{k+1}{1}$-coloring problem is NP-complete for all $k \geq 1$.
\end{theorem}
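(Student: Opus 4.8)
The plan is to obtain this as an immediate corollary of the complement trick (Lemma~\ref{lem:comp}) and Irving's theorem (Theorem~\ref{thm:kzero}). First I would specialize Lemma~\ref{lem:comp} to $i = 1$, which asserts that a graph $G$ is $\qkicoloring{2k+1}{k+1}{1}$-colorable if and only if it is $\qkicoloring{2k+1}{k}{0}$-colorable. Since the correspondence in that lemma is given explicitly by complementing each vertex's color set inside $[2k+1]$, it is computable in polynomial time in both directions; in particular it yields a polynomial-time many-one reduction from $\qkicoloring{2k+1}{k}{0}$-coloring to $\qkicoloring{2k+1}{k+1}{1}$-coloring (the underlying graph is left unchanged).

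Next I would invoke Theorem~\ref{thm:kzero}, which gives NP-completeness of $\qkicoloring{2k+1}{k}{0}$-coloring for every $k \geq 1$. Chaining this with the reduction above shows that $\qkicoloring{2k+1}{k+1}{1}$-coloring is NP-hard for every $k \geq 1$. Membership in NP has already been observed: given an assignment of $(k+1)$-element subsets of $[2k+1]$ to the vertices, one checks in polynomial time that every edge carries at most one common color. Combining the two, $\qkicoloring{2k+1}{k+1}{1}$-coloring is NP-complete for all $k \geq 1$.

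There is no real obstacle here; the only thing to be careful about is matching up the parameters correctly. As a sanity check, the base case $k = 1$ reads ``$\qkicoloring{3}{2}{1}$-coloring is NP-complete,'' and indeed by the equivalence it coincides with $\qkicoloring{3}{1}{0}$-coloring, i.e.\ ordinary $3$-colorability, which is the classical NP-complete problem. Hence the statement holds in the full range $k \geq 1$.
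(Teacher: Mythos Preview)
Your proposal is correct and follows exactly the same approach as the paper: the paper simply states that Theorem~\ref{thm:kzero} and Lemma~\ref{lem:comp} (the complement trick) together imply the result, and you have spelled this out precisely by specializing Lemma~\ref{lem:comp} to $i=1$ and combining it with Irving's NP-completeness of $\qkicoloring{2k+1}{k}{0}$-coloring.
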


\subsection{NP completeness of $(k, i)$-coloring}
In this section, we prove the NP-completeness of $(k,i)$-coloring.  The result is stated below:
\begin{theorem}[$(k, i)$-coloring is NP-complete]\label{thm:gennpc}
The $(k,i)$-coloring problem is NP-complete when $i < k$. That is, for any fixed values $i, k$ such that $i < k$, there exists a $q$ such that 
the $\qkicoloring{q}{k}{i}$-coloring problem is NP-complete.
\end{theorem}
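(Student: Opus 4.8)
The plan is to reduce from the $(k',0)$-coloring problem, whose NP-completeness is guaranteed by Theorem~\ref{thm:kzero} (via $\qkicoloring{2k'+1}{k'}{0}$-coloring), and to ``build up'' a $(k,i)$-coloring instance on top of it. The two tools already in hand that I would leverage are the complement trick of Lemma~\ref{lem:comp}, which converts a $\qkicoloring{2a+b}{a}{0}$ instance into a $\qkicoloring{2a+b}{a+b}{b}$ instance, and a universal-vertex-style padding gadget, which lets us inflate the number of colors while keeping the set-size fixed and the overlap at $0$. The difficulty, as the deleted \verb|\begin{comment}| block in the excerpt acknowledges, is that complement-plus-universal-vertex only produces an arithmetic-progression family of pairs $(k,i)$ and misses most $(k,i)$ with $2\le i\le k-2$; so I need a genuinely new gadget.

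First I would fix target values $k,i$ with $i<k$ and look for auxiliary parameters so that the complement trick lands exactly on $(k,i)$: writing $k = a+b$ and $i = b$ forces $a = k-i \ge 1$ and $b = i$, so the complement trick turns a $\qkicoloring{2(k-i)+i}{k-i}{0} = \qkicoloring{2k-i}{k-i}{0}$-colorable instance into a $\qkicoloring{2k-i}{k}{i}$-colorable one. So it suffices to show NP-completeness of $\qkicoloring{q_0}{k-i}{0}$-coloring for \emph{some} $q_0$, and then boost $q_0$ up to $2k-i$ (or down, if $2k-i$ is already too small — but $2k-i \ge k+1 \ge 2(k-i)+1$ exactly when $i \ge k-1$, so for $i \le k-2$ we are boosting \emph{up} from the Irving value $2(k-i)+1$ to $2k-i$, which needs $2k-i \ge 2(k-i)+1$, i.e. $i\ge 1$, fine). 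The boost from $\qkicoloring{2(k-i)+1}{k-i}{0}$ to $\qkicoloring{2k-i}{k-i}{0}$ increases the palette by $i-1$ while keeping set size $k-i$ and overlap $0$; I would realize this by iterating a universal-vertex gadget $i-1$ times, each application of which (as in the commented-out Lemma on universal vertices) reduces $\qkicoloring{q}{k-i}{0}$-coloring to $\qkicoloring{q+(k-i)}{k-i}{0}$... but wait, that adds $k-i$ colors per step, not $1$, so the arithmetic only works when $(i-1)$ is a multiple of $(k-i)$. This gap is exactly the main obstacle, and it is where I expect the real work of the theorem to lie.

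To close that gap I would design a more flexible padding gadget that adds a \emph{prescribed, smaller} number of colors. One natural idea: instead of a single universal vertex (which is a $K_1$ forced to use a fresh $(k-i)$-set disjoint from everything it sees), attach a small clique or a carefully chosen graph $H$ whose $(k-i,0)$-chromatic behavior is understood exactly — e.g. use the fact that $\chi_{k-i}^0(K_m)$ and more generally the $(k,0)$-chromatic numbers of complete graphs are governed by fractional chromatic number / Kneser-graph homomorphisms, so one can tune $m$ to force precisely the right palette increment. Concretely I would look for a gadget graph $H$ with a distinguished vertex set $A$ such that in any $\qkicoloring{q}{k-i}{0}$-coloring the colors appearing on $A$ are forced to avoid a block of exactly $t$ colors for the desired $t = i-1$, and such that conversely any $(k-i,0)$-coloring of $G$ with palette $q-t$ extends over $H\cup G$. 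Then gluing $H$ to $G$ along $A$ gives the reduction $\qkicoloring{q-t}{k-i}{0} \to \qkicoloring{q}{k-i}{0}$ for the single value $t$ we want. The verification that this is polynomial-time and correct in both directions is routine; the creative step is exhibiting $H$.

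Having obtained NP-completeness of $\qkicoloring{2k-i}{k-i}{0}$-coloring, the theorem follows immediately by one application of the complement trick of Lemma~\ref{lem:comp} with $k \leftarrow k-i$ and $i \leftarrow i$: a graph is $\qkicoloring{2k-i}{k}{i}$-colorable iff it is $\qkicoloring{2k-i}{k-i}{0}$-colorable, so we may take $q = 2k-i$. The cases $i = 0$ (Irving, Theorem~\ref{thm:kzero}), $i = k-1$ (Theorem~\ref{thm:npkminusone}) and $i=1$ (Theorem~\ref{thm:kone}) are already handled and serve as consistency checks on the construction. So in summary: the skeleton of the argument is short and rests entirely on Lemma~\ref{lem:comp}; the entire difficulty is concentrated in constructing a palette-increment-by-exactly-$(i-1)$ gadget for $(k-i,0)$-coloring, and that is the step I would spend essentially all the effort on.
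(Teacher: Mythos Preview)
Your high-level skeleton is exactly the paper's: for target $(k,i)$ with $i\ge 1$, set $k'=k-i$, show that $\qkicoloring{2k'-i+2i}{k'}{0}=\qkicoloring{2k-i}{k-i}{0}$-coloring is NP-complete, and then apply the complement trick (Lemma~\ref{lem:comp}) to land on $\qkicoloring{2k-i}{k}{i}$. So the reduction-to-$(k{-}i,0)$ framing and the final $q=2k-i$ are both right.

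The genuine gap is in the one step you flag yourself: you never construct the palette-increment-by-$(i{-}1)$ gadget, and the sketch you give (``attach some $H$ whose $(k{-}i,0)$-chromatic behavior forces exactly $t$ colors to be avoided on a distinguished set $A$'') is only a specification, not a construction. There is no obvious small graph that, in every $\qkicoloring{q}{k-i}{0}$-coloring, blocks \emph{exactly one} color on a prescribed interface while remaining extendable from an arbitrary $(q{-}1)$-palette coloring of $G$; the universal-vertex trick blocks $k-i$ colors at a time precisely because a single vertex carries a full $k{-}i$-set, and shrinking that to one color seems to require a rigidity argument you have not supplied.

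The paper sidesteps this obstacle entirely. Instead of boosting from Irving's $\qkicoloring{2(k-i)+1}{k-i}{0}$ result, it proves the needed statement $\qkicoloring{2(k-i)+i}{k-i}{0}$ NP-complete \emph{directly} by a fresh reduction from 3-CNFSAT (Theorem~\ref{thm:npcing}). The enabling structural fact is Theorem~\ref{thm:natural}: for $r\ge 2k+1$, every $\qkicoloring{r}{k}{0}$-coloring of the Kneser graph $\kneser{r}{k}$ is the natural coloring up to a permutation of colors (this follows from Erd\H{o}s--Ko--Rado and Hilton--Milner). Embedding copies of $\kneser{2k+i}{k}$ in the gadget then pins down colorings rigidly enough to encode variable assignments and clause satisfaction. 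So the ``creative step'' you anticipated is real, but the paper's solution is a rigidity theorem for Kneser graphs rather than a tunable padding gadget.
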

The main ingredient of the above NP-completeness result is the following theorem, which generalizes 
Theorem \ref{thm:kzero} (originally proved in \cite{Irving1983}).
\begin{theorem}\label{thm:npcing}
The $\qkicoloring{2k + i}{k}{0}$-coloring problem is NP-complete for all $k, i \geq 1$.
\end{theorem}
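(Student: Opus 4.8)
The plan is to prove NP-completeness of $\qkicoloring{2k+i}{k}{0}$-coloring directly, by a gadget reduction that collapses to Irving's proof of Theorem~\ref{thm:kzero} when $i=1$. Equivalently, I want to show it is NP-hard to decide whether an input graph admits a homomorphism to the Kneser graph $\kneser{2k+i}{k}$. The first thing I would check is the obvious shortcut: reduce from the $i=1$ case by attaching to a hard $\qkicoloring{2k+1}{k}{0}$-instance $G$ a gadget that ``absorbs'' the $i-1$ surplus colors, so that $V(G)$ is forced to use only a fixed $(2k+1)$-element sub-palette. Making a clique all of whose vertices are adjacent to every vertex of $G$ does forbid colors from $V(G)$, but a clique on $s$ vertices forbids exactly $sk$ of them, so this only reaches $i\equiv 1\pmod k$; moreover, whenever $i\le k$ no adjacency-based gadget can leave $V(G)$ with a sub-palette of the correct size $2k+1$, since having even one neighbour already removes at least $k$ of the $2k+i$ colors. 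So for general $i$ the hardness has to be re-proved with the full palette built into the gadgets.

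The reduction I would then carry out is from $3$-SAT, following the skeleton of Irving's argument~\cite{Irving1983} but inside $\kneser{2k+i}{k}$: from a formula $\varphi$ build a graph $G_\varphi$ comprising a \emph{palette gadget} that pins down a canonical $k$-coloring and fixes the roles of all $2k+i$ colors; a \emph{variable gadget} for each variable, whose admissible colorings of a designated output vertex (in any proper $\qkicoloring{2k+i}{k}{0}$-coloring of $G_\varphi$) fall into exactly two classes that we read as true and false; and a \emph{clause gadget} joining the three literal outputs of each clause, which extends to a proper coloring exactly when the clause is satisfied. Correctness then has the usual two directions: a satisfying assignment is turned into a proper coloring gadget by gadget, and conversely any proper $\qkicoloring{2k+i}{k}{0}$-coloring of $G_\varphi$ yields, through the states of the variable gadgets, a satisfying assignment. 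Instantiating $i=1$ should reproduce a valid reduction inside $\kneser{2k+1}{k}$, i.e.\ Irving's theorem.

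The hard part will be the gadget construction and the accompanying rigidity proofs — establishing that each gadget admits \emph{precisely} its intended family of colorings, uniformly in $k$ and $i$. The obstruction is exactly the $i-1$ extra colors: when $i=1$ one can exploit the tightness of $\kneser{2k+1}{k}$ (the complement of a $k$-set has only $k+1$ elements, so the colorings of a vertex and its neighbours are severely constrained), but for larger $i$ the complement has $k+i$ elements and there is far more room, so the variable gadget threatens to pick up spurious ``stable'' states. I would counter this by equipping every gadget with anchor edges to the palette gadget that pin the surplus colors and restore the needed rigidity; checking that this works amounts to a somewhat delicate case analysis of how a $k$-set can intersect the anchored sets. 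Finally, once Theorem~\ref{thm:npcing} is in hand, the target Theorem~\ref{thm:gennpc} follows at once: by the complement trick (Lemma~\ref{lem:comp}) the $\qkicoloring{2k-i}{k}{i}$-coloring problem is polynomially equivalent to the $\qkicoloring{2(k-i)+i}{k-i}{0}$-coloring problem, which is NP-complete by Theorem~\ref{thm:npcing} with parameters $k-i\ge 1$ and $i\ge 1$ (here $k-i\ge 1$ because $i<k$).
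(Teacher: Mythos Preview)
Your proposal is correct and follows essentially the same route as the paper: a 3-SAT reduction with palette, variable, and clause gadgets built from copies of $\kneser{2k+i}{k}$, together with anchor vertices that absorb the surplus $i-1$ colors. The paper's concrete palette gadget is the Kneser graph $\kneser{2k+i}{k}$ itself, whose rigidity (every $\qkicoloring{2k+i}{k}{0}$-coloring is the natural one up to a permutation of colors) is proved via the Erd\H{o}s--Ko--Rado and Hilton--Milner theorems --- this is precisely the rigidity input you anticipate needing, and the clause gadget is a further Kneser copy in which a ``totally independent 3-set'' encodes the disjunction.
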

Most of the remaining part of the section will be used to prove Theorem \ref{thm:npcing}. We first show how to 
infer Theorem \ref{thm:gennpc} from Theorem \ref{thm:npcing}.

\begin{proof}[Proof of Theorem \ref{thm:gennpc}]
Let $k', i$ be such that $k' > i \geq 0$. We will show that $(k',i)$-coloring is NP-complete.
Theorem \ref{thm:npcing} shows the NP-completeness of $(k',0)$-coloring when $k' \geq 1$.

To show the NP-completeness of $(k', i)$-coloring when $k' > i \geq 1$, let $k = k' - i$. Notice
that $k, i \geq 1$.
By Theorem \ref{thm:npcing}, we have that $\qkicoloring{2k + i}{k}{0}$-coloring is NP-complete. By  
Lemma \ref{lem:comp} (complement trick), a graph is $\qkicoloring{2k + i}{k}{0}$-colorable if and only if
it is $\qkicoloring{2k+i}{k+i}{i}$-colorable. Hence the $\qkicoloring{2k+i}{k+i}{i}$-coloring problem is 
NP-complete as well. Substituting $k' = k + i$, we get that the $\qkicoloring{2k' - i}{k'}{i}$-coloring
problem is NP-complete.
\qed
\end{proof} 

Now we shall start working towards proving Theorem \ref{thm:npcing}. The \emph{Kneser graph} 
forms an important component of the proof.
\begin{definition}[Kneser Graph]
The Kneser graph $\kneser{r}{k}$ is the graph whose vertices are $\binom{[r]}{k}$, the $k$-sized subsets of
$[r]$, and vertices $x$ and $y$ are adjacent if and only if $x \cap y = \emptyset$ (when $x$ and $y$ are 
viewed as sets).
\end{definition}
Consider the Kneser graph $\kneser{r}{k}$. We may view the elements of the set $[r]$ as colors. 
In this case, the sets associated with the vertices themselves form
a proper $(k,0)$-coloring of $\kneser{r}{k}$.  We will call this $(k,0)$-coloring
as the \emph{natural coloring}, denoted by $N$.
This is because two vertices are adjacent if and only if they do not share any colors. 

The following theorem
states that when $r \geq 2k + 1$, the natural coloring is essentially the only proper $(k,0)$-coloring
of the Kneser graph $\kneser{r}{k}$ that uses $r$ colors. 

\begin{theorem}\label{thm:natural}
Let $r \geq 2k + 1$ and $k \geq 1$. Any $\qkicoloring{r}{k}{0}$-coloring $C$ of $\kneser{r}{k}$ can be obtained from a natural coloring $N$ by a permutation of colors.
\end{theorem}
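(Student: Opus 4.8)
The plan is to analyze, for each color $j \in [r]$, the set of Kneser-vertices that receive color $j$ under $C$, and to show via the Erd\H{o}s--Ko--Rado theorem that this set is forced to be a ``star''; assembling these stars then recovers a permuted natural coloring. First I would, for each $j \in [r]$, define $F_j = \{v \in \binom{[r]}{k} : j \in C(v)\}$. If $v, w \in F_j$ then $j \in C(v) \cap C(w)$, so $v$ and $w$ are non-adjacent in $\kneser{r}{k}$, i.e.\ $v \cap w \neq \emptyset$. Hence, viewing the vertices as sets, each $F_j$ is an intersecting family of $k$-subsets of $[r]$, so $|F_j| \le \binom{r-1}{k-1}$ by the Erd\H{o}s--Ko--Rado theorem (valid since $r \ge 2k$). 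On the other hand, each vertex $v$ lies in exactly $k$ of the sets $F_j$ (precisely those $j \in C(v)$), so $\sum_{j=1}^{r} |F_j| = k\binom{r}{k} = r\binom{r-1}{k-1}$. Combined with the upper bound, this forces $|F_j| = \binom{r-1}{k-1}$ for every $j$; that is, every $F_j$ is a maximum intersecting family.

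Next I would invoke the uniqueness part of the Erd\H{o}s--Ko--Rado theorem: since $r \ge 2k+1$ (strictly larger than $2k$), every maximum intersecting family of $k$-subsets of $[r]$ is a star, i.e.\ the family of all $k$-subsets containing a fixed element. Thus for each $j$ there is an element $\pi(j) \in [r]$ with $F_j = \{v : \pi(j) \in v\}$, equivalently $j \in C(v) \iff \pi(j) \in v$, and hence $C(v) = \{j : \pi(j) \in v\}$ for every vertex $v$.

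It then remains to check that $\pi$ is a permutation of $[r]$. For any $k$-subset $v$ we have $k = |C(v)| = |\{j : \pi(j)\in v\}| = \sum_{a\in v}|\pi^{-1}(a)|$. Applying this identity to two $k$-subsets $v = w\cup\{a\}$ and $v' = w\cup\{a'\}$ that differ in a single element (possible since $r \ge k+1$) gives $|\pi^{-1}(a)| = |\pi^{-1}(a')|$; since the Johnson graph on $k$-subsets of $[r]$ is connected (as $1 \le k \le r-1$), all the quantities $|\pi^{-1}(a)|$, $a \in [r]$, are equal, and since they sum to $r$ each equals $1$. So $\pi$ is a bijection, and $C(v) = \pi^{-1}(v)$ for all $v$, which is exactly the natural coloring $N$ composed with the permutation $\pi^{-1}$ of the color set.

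The main obstacle is ensuring the Erd\H{o}s--Ko--Rado machinery is applied with the correct hypotheses: the cardinality bound needs only $r \ge 2k$, but the rigidity statement that the \emph{only} extremal intersecting families are stars genuinely requires the strict inequality $r \ge 2k+1$ present in the hypothesis --- at $r = 2k$ there are non-star extremal families and the conclusion of the theorem fails. The remaining ingredients (the double-counting identity, connectivity of the Johnson graph, and the verification that $\pi$ is a bijection) are routine.
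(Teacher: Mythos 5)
Your proof is correct and follows essentially the same route as the paper: the Erd\H{o}s--Ko--Rado bound combined with the double-counting identity $\sum_{j}|F_j| = k\binom{r}{k} = r\binom{r-1}{k-1}$ forces every color class to be a maximum intersecting family, and the uniqueness of extremal families for $r \geq 2k+1$ (which the paper invokes via Hilton--Milner) identifies each class as a star. The only divergence is the final check that the induced map on colors is a bijection: you count preimage sizes using $|C(v)|=k$ and an exchange argument on $k$-subsets, whereas the paper deletes one star and notes that the remaining copy of $\kneser{r-1}{k}$ would otherwise be $(k,0)$-colored with only $r-2$ colors, contradicting the same EKR-based lower bound; both arguments are sound.
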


Before proving the above theorem, we state two known results that would be necessary in the proof.

\begin{theorem}[Erd\H{o}s-Ko-Rado Theorem \cite{ekr}]\label{thm:ekr}
Let $r \geq 2k$. The largest independent set of the Kneser graph $\kneser{r}{k}$ is of size $\binom{r-1}{k-1}$.
\end{theorem}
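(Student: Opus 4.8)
The plan is to prove the statement as the Erd\H{o}s--Ko--Rado bound via Katona's cyclic-permutation double-counting argument, which is self-contained and does not rely on Theorem \ref{thm:natural} (that would be circular). First I would translate the statement into set-system language: by the definition of adjacency in $\kneser{r}{k}$, an independent set is a family $\mathcal{F} \subseteq \binom{[r]}{k}$ in which no two sets are disjoint, i.e. an \emph{intersecting family}. The lower bound is immediate: fixing any $x \in [r]$, the family of all $k$-subsets containing $x$ is intersecting and has size $\binom{r-1}{k-1}$. So the real work is the matching upper bound $|\mathcal{F}| \le \binom{r-1}{k-1}$ for every intersecting family.

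For the upper bound I would count pairs $(\pi, A)$, where $\pi$ is a cyclic ordering of $[r]$ around a circle and $A \in \mathcal{F}$ appears as a contiguous \emph{arc} (a block of $k$ consecutive positions) of $\pi$. Counting one way, a fixed $k$-set $A$ occurs as an arc in exactly $k!\,(r-k)!$ of the $(r-1)!$ cyclic orderings: glue $A$ into a block with $k!$ internal orderings, then cyclically arrange this block together with the remaining $r-k$ elements in $(r-k)!$ ways. Hence the number of pairs equals $|\mathcal{F}|\cdot k!\,(r-k)!$. Counting the other way, summing over all $(r-1)!$ cyclic orderings, I bound the number of members of $\mathcal{F}$ that can simultaneously appear as arcs of a single ordering.

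The crux, and the step I expect to be the main obstacle, is the arc lemma: in any single cyclic ordering at most $k$ sets of an intersecting family can occur as arcs, and this is exactly where the hypothesis $r \ge 2k$ enters. I would prove it by fixing one arc $A = \{1,2,\dots,k\}$ of the family and noting that every other arc in the family must meet $A$. The arcs meeting $A$ other than $A$ itself split into $k-1$ pairs: for $1 \le j \le k-1$ the $j$-th pair consists of $\{1+j,\dots,k+j\}$ and $\{1+j-k,\dots,j\}$ (indices taken mod $r$), both of which intersect $A$. Because $r \ge 2k$, the two arcs within each pair are themselves disjoint (the wraparound overlap is avoided precisely when $k+j < r+1+j-k$, i.e. $r \ge 2k$), so an intersecting family can contain at most one arc from each pair. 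Together with $A$ this gives at most $1 + (k-1) = k$ arcs, which is the lemma.

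Finally I would combine the two counts into $|\mathcal{F}|\cdot k!\,(r-k)! \le (r-1)!\cdot k$, which rearranges to $|\mathcal{F}| \le \frac{k\,(r-1)!}{k!\,(r-k)!} = \frac{(r-1)!}{(k-1)!\,(r-k)!} = \binom{r-1}{k-1}$, matching the star construction and completing the proof. The only delicate points are verifying that the listed pairs exhaust all arcs meeting $A$ and that the disjointness within each pair is exactly equivalent to $r \ge 2k$; once the arc lemma is pinned down, the rest is routine double counting.
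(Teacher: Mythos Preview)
Your proof is correct: this is Katona's cyclic-permutation argument, and the details you give (the count $k!\,(r-k)!$ for arcs, the $k-1$ complementary pairs of arcs meeting a fixed arc, and the use of $r\ge 2k$ to ensure each pair is disjoint) are all sound.

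However, there is nothing to compare against. The paper does not prove Theorem~\ref{thm:ekr}; it quotes it as a known result from~\cite{ekr} (along with the Hilton--Milner theorem) and uses it as a black box in the proofs of Lemma~\ref{lem:colorpat} and Theorem~\ref{thm:natural}. So your proposal supplies a self-contained proof where the paper simply cites the literature. If you want to match the paper's treatment, a one-line citation suffices; if you want the exposition to be self-contained, your Katona argument is the standard short choice.
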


\begin{theorem}[Hilton and Milner \cite{hm}]\label{thm:hilton}
Let $r > 2k$. Every independent set of the Kneser graph $\kneser{r}{k}$ of size $\binom{r-1}{k-1}$
are the set of vertices that contain some color $a$ in their natural coloring.
\end{theorem}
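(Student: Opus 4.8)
The plan is to read the statement as the uniqueness (extremal) part of the Erd\H{o}s--Ko--Rado theorem and to prove it with Katona's cyclic permutation method, which re-derives the bound of Theorem~\ref{thm:ekr} and, in its equality analysis, pins the extremal families down to stars. First I would translate the problem: an independent set of $\kneser{r}{k}$ is exactly an \emph{intersecting family} $\mathcal F \subseteq \binom{[r]}{k}$, i.e.\ a collection of $k$-sets that pairwise meet, since two vertices of the Kneser graph are non-adjacent precisely when the corresponding sets intersect. A family of the form ``all vertices containing a fixed color $a$'' is the \emph{star} $\mathcal F_a = \{A : a \in A\}$, which has size $\binom{r-1}{k-1}$. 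So the target becomes: for $r > 2k$, every intersecting $\mathcal F$ with $|\mathcal F| = \binom{r-1}{k-1}$ equals some star $\mathcal F_a$.

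Next I would fix, for each cyclic ordering $\sigma$ of $[r]$, the notion of an \emph{arc}: a $k$-set whose elements occupy $k$ consecutive positions of $\sigma$. The engine is Katona's arc lemma: for $r > 2k$, any intersecting family of arcs has at most $k$ members, with equality exactly when all its arcs pass through one common position of the cycle (equivalently, when they form $k$ consecutive arcs). Its short proof fixes one arc $A$ in the family and pairs the other arcs meeting $A$ into $k-1$ pairs of \emph{disjoint} arcs---disjointness is where $r > 2k$ enters---so at most one arc per pair survives, giving the bound $k$. Double counting over all cyclic orderings, using that every $k$-set is an arc in the same number $N$ of orderings by symmetry, yields
\[
|\mathcal F| \cdot N \;=\; \sum_{\sigma} m(\sigma) \;\le\; k\,(r-1)!,
\]
where $m(\sigma)$ is the number of members of $\mathcal F$ that are arcs in $\sigma$; this rearranges to $|\mathcal F| \le \binom{r-1}{k-1}$, recovering Theorem~\ref{thm:ekr}.

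Assuming now that $|\mathcal F| = \binom{r-1}{k-1}$, every summand must attain its maximum, so $m(\sigma) = k$ for \emph{every} cyclic ordering, and by the equality case of the arc lemma the $k$ arcs of $\mathcal F$ in $\sigma$ are $k$ consecutive arcs sharing a common position. Writing $a(\sigma)$ for the element sitting at that position, every $\mathcal F$-arc of $\sigma$ contains $a(\sigma)$. It then remains to promote these per-ordering local stars to a single global star: I would connect any two cyclic orderings by a sequence of adjacent transpositions and observe that swapping two neighboring positions changes the underlying sets of only two arcs, so the $k$-arc star of $\sigma$ and that of the transposed $\sigma'$ agree in all but at most two sets, forcing their common elements $a(\sigma)$ and $a(\sigma')$ to be compatible. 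Propagating this compatibility across the connected graph of cyclic orderings under adjacent transpositions should force a single element $a$ with $a \in A$ for all $A \in \mathcal F$, whence $\mathcal F = \mathcal F_a$ by the size count.

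This global consistency step is the crux and the main obstacle: the ``local star'' conclusion holds for each ordering separately, and the real work is in showing that the common element cannot drift from one ordering to the next. I would also record that the hypothesis $r > 2k$ is essential---it is exactly what makes the paired arcs disjoint, and at $r = 2k$ the complementary-pair families furnish non-star maxima---and that the degenerate case $k = 1$ (where $\binom{r-1}{0} = 1$) is immediate.
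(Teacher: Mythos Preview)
The paper does not prove this theorem at all: it is stated as a known result of Hilton and Milner, cited as \cite{hm}, and then used as a black box in the proof of Theorem~\ref{thm:natural}. There is therefore no ``paper's own proof'' to compare against; your proposal goes strictly beyond what the paper does.

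As for the argument itself, the strategy you choose --- Katona's cyclic permutation method with an equality analysis --- is a legitimate route to the uniqueness part of Erd\H{o}s--Ko--Rado. The translation of independent sets in $\kneser{r}{k}$ to intersecting families, the arc lemma with its equality case, and the double count are all correct, and you are right that equality in the double count forces $m(\sigma)=k$ for every cyclic ordering $\sigma$, with the $k$ arcs of $\mathcal F$ in $\sigma$ forming a star through some position. The proof is incomplete precisely where you flag it: the ``global consistency'' step that promotes the per-ordering element $a(\sigma)$ to a single element $a$ working for all of $\mathcal F$. Your transposition sketch is in the right direction --- an adjacent transposition changes only two arcs as underlying sets, so the families $\mathcal F\cap\{\text{arcs of }\sigma\}$ and $\mathcal F\cap\{\text{arcs of }\sigma'\}$ share at least $k-2$ members --- but deducing $a(\sigma)=a(\sigma')$ from that overlap still needs an argument you have not written down, and for $k=2$ the overlap can be empty, so that case would need separate treatment. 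As it stands this is a plausible outline rather than a proof. If you want a quick completion, note that the full Hilton--Milner theorem bounds every \emph{non-star} intersecting family by $1+\binom{r-1}{k-1}-\binom{r-k-1}{k-1}$, which is strictly below $\binom{r-1}{k-1}$ for $r>2k$; the statement in question is then immediate.
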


We first require the following lemma.

\begin{lemma}\label{lem:colorpat}
Let $r \geq 2k$ and $k \geq 1$. Any $(k,0)$-coloring of $\kneser{r}{k}$ requires at least $r$ colors. If $C$ is a  $\qkicoloring{r}{k}{0}$-coloring of 
$\kneser{r}{k}$, then each color in $C$ must occur exactly $\binom{r-1}{k-1}$ times.
\end{lemma}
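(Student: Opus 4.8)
The plan is to use a double-counting argument combined with the Erdős–Ko–Rado theorem. First I would observe that in any proper $(k,0)$-coloring $C$ of $\kneser{r}{k}$, each color class — the set of vertices that receive a particular color — is an independent set, since two adjacent vertices share no colors. The Kneser graph $\kneser{r}{k}$ has $\binom{r}{k}$ vertices, and each vertex is assigned exactly $k$ colors, so the multiset of (vertex, color) incidences has size $k\binom{r}{k}$. If $C$ uses $t$ colors and the $j$-th color class has size $s_j$, then $\sum_{j=1}^{t} s_j = k\binom{r}{k}$.

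Next I would invoke Theorem~\ref{thm:ekr} (Erdős–Ko–Rado): since $r \geq 2k$, every independent set of $\kneser{r}{k}$ has size at most $\binom{r-1}{k-1}$, so $s_j \leq \binom{r-1}{k-1}$ for all $j$. Combining with the incidence count,
\[
k\binom{r}{k} = \sum_{j=1}^{t} s_j \leq t \binom{r-1}{k-1}.
\]
Using the identity $k\binom{r}{k} = r\binom{r-1}{k-1}$, this gives $r\binom{r-1}{k-1} \leq t\binom{r-1}{k-1}$, hence $t \geq r$. This proves the first assertion: any $(k,0)$-coloring needs at least $r$ colors.

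For the second assertion, suppose $C$ is a $\qkicoloring{r}{k}{0}$-coloring, so $t = r$. Then the inequality chain above becomes an equality: $\sum_{j=1}^{r} s_j = r\binom{r-1}{k-1}$ while each $s_j \leq \binom{r-1}{k-1}$. The only way $r$ terms each bounded by $\binom{r-1}{k-1}$ can sum to $r\binom{r-1}{k-1}$ is if every $s_j = \binom{r-1}{k-1}$. Hence each color occurs in exactly $\binom{r-1}{k-1}$ vertices. I do not anticipate a serious obstacle here; the only point requiring a little care is the binomial identity $k\binom{r}{k} = r\binom{r-1}{k-1}$ and making sure the "equality in a sum of bounded terms" step is stated cleanly. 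This lemma then feeds into the proof of Theorem~\ref{thm:natural} by forcing each color class to be a maximum independent set, at which point Theorem~\ref{thm:hilton} (Hilton–Milner) can be applied to identify each color class with the set of vertices containing a fixed element.
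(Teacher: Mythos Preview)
Your proposal is correct and follows essentially the same approach as the paper: both arguments observe that color classes are independent sets, apply the Erd\H{o}s--Ko--Rado bound (Theorem~\ref{thm:ekr}) to cap each class at $\binom{r-1}{k-1}$, and then use the total incidence count $k\binom{r}{k} = r\binom{r-1}{k-1}$ to force $t \geq r$ and, when $t = r$, equality in every term. The paper phrases the last step as an averaging/compensation argument rather than an explicit sum-of-bounded-terms equality, but the content is identical.
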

\begin{proof}
Let $N$ be the natural coloring and $C$ be an arbitrary $\qkicoloring{r}{k}{0}$-coloring of 
$\kneser{r}{k}$.
Each color occurs exactly $\binom{r-1}{k-1}$ times in $N$. 
Below, we argue that this must happen in $C$ as well. 

All the vertices containing a specific color form an independent set.
If any color in $C$  occurs more than $\binom{r-1}{k-1}$ times, then we will have a contradiction 
to Theorem \ref{thm:ekr}.
Now notice that the total number of ``color slots'' is fixed at $k \cdot \binom{r}{k}$. 
Hence each color must occur $k\cdot \binom{r}{k}/r = \binom{r-1}{k-1}$ times on average.
 If any color in $C$ occurs strictly less than $\binom{r-1}{k-1}$ times, then some other color has to occur 
 more than $\binom{r-1}{k-1}$ times in order to compensate. Hence each color in $C$ must 
 occur exactly $\binom{r-1}{k-1}$ times. 
 
 Similarly, a $(k,0)$-coloring
 of $\kneser{r}{k}$ using less than $r$ colors also implies an independent set larger than  
 $\binom{r-1}{k-1}$, contradicting Theorem \ref{thm:ekr}.
\qed
\end{proof}

Now we prove Theorem  \ref{thm:natural}.

\begin{proof}[Proof of Theorem \ref{thm:natural}]
Let $r \geq 2k + 1$, $k \geq 1$ and $G = \kneser{r}{k}$.
WLOG let the colors used in $C$ be from the set $[r]$. We will show that 
there is a permutation function $\sigma: [r] \rightarrow [r]$ such that  every vertex $v$ 
with 
$C(v) =  \{c_1, c_2, \ldots, c_k\} \subseteq [r]$,
has natural coloring $N(v) = \{ \sigma(c_1), \sigma(c_2), \ldots, \sigma(c_k) \}$.

%

By Lemma \ref{lem:colorpat}, each color in $C$ must occur exactly $\binom{r-1}{k-1}$ times in $G$.
For any color $c \in [r]$, the vertices with color $c$ in $C$ form an independent set of size 
$\binom{r-1}{k-1}$. 
Now Theorem \ref{thm:hilton} states that every independent set of size $\binom{r-1}{k-1}$ is the set 
of vertices that contain some color $a$ in their natural coloring. 
Setting $a = \sigma(c)$, we get the desired mapping between the colors in $N$ and $C$.

To complete the proof, we need to show that the mapping $\sigma$ is a permutation. 
Since both domain and codomain of $\sigma$ is $[r]$, it is enough to show that 
$\sigma$ is injective.
Assume for the sake of contradiction that 
there are two colors $c_1 \neq c_2$ such that $\sigma(c_1) = \sigma(c_2) = a$.
Let $X$ be the set of vertices in $G$ that contain the color $a$ in the natural coloring $N$.
By our argument, note that $X$ is also the set of vertices in $G$ that contain the color $c_1$ in $C$ (and also the vertices
that contain the color $c_2$ in $C$).

Remove the set of vertices $X$ from $G$. 
What  remains is a copy 
of the Kneser graph $\kneser{r-1}{k}$, which is $(k,0)$-colored by $C$ using $r-2$ colors. 
Since $r-1 \geq 2k$, by Lemma \ref{lem:colorpat}, the graph $\kneser{r-1}{k}$ cannot be 
$(k,0)$-colored by $r-2$ colors. This contradicts the assumption that there are colors
$c_1 \neq c_2$ such that $\sigma(c_1) = \sigma(c_2) = a$. Hence $\sigma$ is injective and a permutation,
completing the proof.
\qed
\end{proof}

We conclude this section with one more definition, that of a totally independent 3-set in a Kneser graph.
\begin{definition}[Totally Independent 3-Set]
Consider the Kneser graph $\kneser{r}{k}$ with $r \geq 2k + 1$. Any set of three vertices of $\kneser{r}{k}$
form a \emph{totally independent 3-set} if they all share the set of same $k-1$ colors in their natural coloring, and differ
only in the last color.
\end{definition}
Because of Theorem \ref{thm:natural}, notice that 
the above definition remains unchanged if we use any $\qkicoloring{r}{k}{0}$-coloring
instead of the natural coloring.

\subsection{Proof of Theorem \ref{thm:npcing}}
In this section, we prove Theorem \ref{thm:npcing} which states that the $\qkicoloring{2k + i}{k}{0}$-coloring 
problem is NP-complete for all $k, i \geq 1$.

As before, it is easy to see that the problem is in NP and we shall focus on showing that 
the problem is NP-hard. 
We will show that 
there is a polynomial time reduction from 3-\emph{CNFSAT}  to $\qkicoloring{2k+i}{k}{0}$-coloring. 
Given a 3-CNF Boolean formula $\psi$, 
we will construct a graph $G$ in polynomial time such that $G$ is 
$\qkicoloring{2k+i}{k}{0}$-colorable if and only if $\psi$ is satisfiable.
The formula $\psi$ is a conjunction of clauses, with each clause consisting
of exactly 3 literals.
For a fixed value of $k$ and $i$, we now describe the construction of the graph $G$ from $\psi$. 

\medskip
\noindent\textbf{Construction of $G$ from $\psi$:} Let $\psi$ be a 3-CNF formula with $n$ variables 
and $m$ clauses. We first describe the vertices of the graph $G$.
\begin{enumerate}[V (i)]
\item Two vertices $u$ and $w$.

\item A set of $\binom{2k+i-1}{k-1}$ vertices denoted as follows: 
$$A = \left\{v_{\ell} \mid {\ell}=1,2,3,\ldots,{\binom{2k+i-1}{k-1}} \right\}.$$

\item For each variable $p$ in $\psi$, a set $B_p$ of $\binom{2k+i-1}{k}$ vertices. Each set $B_p$ is defined as follows:
$$B_p = \{x_p, \overline{x_p}\} \cup  \left\{y_{p, \ell} \mid {\ell}=1,2,3,\ldots, {\binom{2k+i-1}{k}} -2 \right \}.$$

\item For each clause $C_j$ in $\psi$, a set of three vertices denoted by $z_{j,1}, z_{j,2},$ and $z_{j,3}$. 

\item For each clause $C_j$ in $\psi$, a set of ${\binom{2k+i}{k}}$ vertices denoted by $\Gamma_{j}$ defined as follows:
$$\Gamma_j = \left\{\gamma_{j,\ell} \mid {\ell}=1,2,3,\ldots,{\binom{2k+i}{k}} \right\}.$$
\end{enumerate}
Thus the total number of vertices in $G$ is $2 +  {\binom{2k+i-1}{k-1}} + n {\binom{2k+i-1}{k}} + m \left(3 + 
{\binom{2k+i}{k}}\right)$. For fixed values of $k$ and $i$, the number of vertices in $G$ is polynomial
in the size of the formula $\psi$.

Now we describe the edges of $G$. While describing the edges, we use the shorthand notation $uw$ for the edge
$\{u, w\}$.
\begin{enumerate}[E (i)]
\item For each $1 \leq p \leq n$, the vertices $A \cup B_p$ form a copy of the
Kneser graph $\kneser{2k+i}{k}$.
When the vertices of $\kneser{2k+i}{k}$ are regarded as the $k$-sized subsets of $[2k+i]$, the
vertices in $A$ correspond to the $k$-sized subsets of 
$[2k+i]\setminus \{2k,2k+1\}$ and $(k-2)$-sized subsets of $[2k+i]\setminus \{2k, 2k+1\}$ 
union with $\{2k,2k+1\}$. That is, vertices in $A$ correspond to the $k$-sized subsets of 
$[2k+i]$ that either contain both $2k$ and $2k+1$, or contain none of $2k$ and $2k+1$.

The vertices $x_p$ and $\overline{x_p}$ correspond to 
the sets $[k-1]\cup \{2k\}$ and 
$[k-1]\cup \{2k + 1\}$ in some order. 
The vertices in $B_{p}$ correspond to the other $k$-sized subsets of 
$[2k+i]$, that contain either $2k$ or $2k+1$ but not both.

\item Let $u', w' \in A$ be the vertices such that $u'$ represents the set $\{1, 2, \ldots, k\}$ and 
$w'$ represents the set $\{k, k+1, \ldots, 2k-1\}$.
Then $uw$, $uu'$ and $ww'$ are edges.

\item Let $U', W' \subseteq A$ defined as follows\footnote{
We remark that when $i=1$, the sets $U'$ and $W'$ are empty sets. However, the correctness of proof 
is maintained even in this case.}. The set $U'$ represents those set of vertices 
which have the set $\{1, 2, \ldots , k-1\}$ with the $k$-th element from $\{2k+2, 2k+3, \ldots, 2k+i\}$ while 
$W'$ represents those set of vertices which have the set $\{k, k+1, \ldots, 2k-2\}$ with the $k$-th element from 
$\{2k+2, 2k+3, \ldots, 2k+i\}$. Note that $|U'|=|W'|= i-1$.

Now join $u$ to all the vertices in $U'$ using edges, and similarly join $w$ to all the vertices in $W'$.
\footnote{We can reduce the sizes of both $U'$ and $W'$ to $\lceil (i-1)/k\rceil$. This can be done by forming the $k$-sized sets from 
$\{2k+2, 2k+3, \ldots, 2k+i\}$  in increasing order. However, we use the simpler (but larger) sets $U'$ and $W'$ in the proof.}

\item For each $1 \leq j \leq m$, $wz_{j,1}, wz_{j,2}$  and $wz_{j,3}$ are edges.

\item Suppose the $j$-th clause is $C_j = x_{j,1} \vee x_{j,2} \vee x_{j,3}$, where each of $x_{j,1}$, $x_{j,2}$ and $x_{j,3}$ are literals 
$x_p$ or $\overline{x_p}$ for some $1 \leq p \leq n$. Then $z_{j,1}\overline{x_{j,1}}$, $z_{j,2}\overline{x_{j,2}}$ and $z_{j,3}\overline{x_{j,3}}$
are edges, for each $j$.

\item For each $j$, the vertex set $\Gamma_{j}$ forms a copy of $\kneser{2k+i}{k}$.

\item For each $j$, we identify three vertices $t_{j,1}$, $t_{j, 2}$ and $t_{j,3}$ from $\Gamma_j$, such that 
these three vertices form a totally independent 3-set in $\Gamma_j$. 

Then $z_{j,1}t_{j,1}$, 
$z_{j,2}t_{j,2}$ and  $z_{j,3}t_{j,3}$ are edges for each $j$.

\item For each $j$, all the vertices in $\{z_{j,1}, z_{j, 2}, z_{j, 3}\}$ are joined to all the vertices in $U'$, forming a complete bipartite graph.

\item Similarly, for each $j$, the sets $\{t_{j,1}, t_{j, 2}, t_{j, 3}\}$ and $W'$ form a complete bipartite graph.
\end{enumerate}

\begin{figure}[t!]
\centering
\begin{tikzpicture}
     [fill opacity=1, scale = 0.9]


 \node (Gamma) [draw,circle, minimum size=3.5 cm,label=above left:$\Gamma_j$] at (2, 11) {};

    \path coordinate [label = above left:$t_{j,1}$]  (i1) at (2.75,11.75)

          coordinate [label = above left:$t_{j,2}$] (i2) at (2,11)

         coordinate [label = above left:$t_{j,3}$] (i3) at (1.25,10.25)
          coordinate [label = above:$z_{j,1}$] (z1) at (9.25,11.75)
          coordinate [label = above:$z_{j,2}$] (z2) at (10,11)
          coordinate [label = above:$z_{j,3}$] (z3) at (10.75,10.25);

  \node (A) [draw,ellipse, minimum height=4 cm,minimum width= 7.5 cm, label=160:$A$] at (5, 6) {};

 \node (capwdash) [draw,  shade, ellipse, minimum height=1 cm,minimum width= 2 cm, label=above right:$W'$] at (3, 6) {};
 \node (capudash) [draw,  shade, ellipse, minimum height=1 cm,minimum width= 2 cm, label=above left:$U'$] at (7, 6) {};
    \path coordinate (a1) at (2.5,6)
          coordinate (a2) at (3.5,6)
          coordinate (b1) at (6.5,6)
          coordinate (b2) at (7.5,6)
          coordinate [label = {[label distance = 0.1 cm] above:$w'$}] (wdash) at (4.5,5)
          coordinate [label = {[label distance = 0.1 cm] above:$u'$}] (udash) at (5.5,5);
          \path (b1) -- node[auto=false]{\ldots} (b2);
          \path (a1) -- node[auto=false]{\ldots} (a2);

    \path coordinate [label = {[label distance = 0.05 cm] below:$u$}]  (u) at (7,2)
    	  coordinate [label = {[label distance = 0.05 cm] below:$w$}] (w) at (3,2)
	  coordinate (temp1) at (8.7, 2.7)
	  coordinate (temp2) at (9.0, 2.4)
	  coordinate (temp3) at (9.3, 2.1);

    \node (B1) [draw,circle, minimum size=2 cm,label=160:$B_1$] at (14, 14) {};
   
    \path coordinate [label = {[label distance = 0 cm] above:$x_1$}] (x1) at (13.5,14)
          coordinate [label = {[label distance = 0 cm] above:$\overline{x_1}$}] (x1b) at (14.5,14);
          
    \node (B2) [draw,circle, minimum size=2 cm,label=160:$B_2$] at (14, 11) {};
   
    \path coordinate [label = {[label distance = 0 cm] above:$x_2$}] (x2) at (13.5,11)
          coordinate [label = {[label distance = 0 cm] above:$\overline{x_2}$}] (x2b) at (14.5,11);

    \node (B6) [draw,circle, minimum size=2 cm,label=160:$B_6$] at (14, 7) {};
   
    \path coordinate [label = {[label distance = 0 cm] above:$x_6$}] (x6) at (13.5,7)
          coordinate [label = {[label distance = 0 cm] above:$\overline{x_6}$}] (x6b) at (14.5,7);

    \node (Bn) [draw,circle, minimum size=2 cm,label=160:$B_n$] at (14, 3) {};
   
    \path coordinate [label = {[label distance = 0 cm] above:$x_n$}] (xn) at (13.5,3)
          coordinate [label = {[label distance = 0 cm] above:$\overline{x_n}$}] (xnb) at (14.5,3);

          \path (B2) -- node[auto=false]{\vdots} (B6);
          \path (B6) -- node[auto=false]{\vdots} (Bn);

	\draw[color=black] (z1) to [out=-65,in=45] (temp1) ;
	\draw[color=black] (temp1) to  [out=-135,in=-45] (w) ;
	\draw[color=black] (z2) to [out=-65,in=45] (temp2) ;
	\draw[color=black] (temp2) to  [out=-135,in=-45] (w) ;
	\draw[color=black] (z3) to [out=-65,in=45] (temp3) ;
	\draw[color=black] (temp3) to  [out=-135,in=-45] (w) ;

    \draw (i1)--(z1) (i2)--(z2) (i3)--(z3); 
    \draw (i1)--(a1) (i1)--(a2) (i2)--(a1) (i2)--(a2) (i3)--(a1) (i3)--(a2);
    \draw (z1)--(b1) (z1)--(b2) (z2)--(b1) (z2)--(b2) (z3)--(b1) (z3)--(b2);
    \draw (u)--(w) (w)--(a1) (w)--(a2) (u)--(b1) (u)--(b2);
    \draw (w)--(wdash) (u)--(udash);
    \draw (z1)--(x1b);
    \draw (z2)--(x2);
    \draw (z3)--(x6b);

    \foreach \point in {a1,a2,b1,b2,i1,i2,i3,z1,z2,z3,u,w,wdash,udash,x1,x1b,x2,x2b,x6,x6b,xn,xnb}
      \fill [black,opacity=1] (\point) circle (1.5pt);
 \end{tikzpicture} 
 \caption{The graph $G$ that corresponds to the formula $\psi$.
 The figure depicts the edges of the graph $G$ for the clause $C_j= x_1 \vee \overline{x_2} \vee x_{6}$. For each clause in $\psi$, 
there will be a copy of vertices $\Gamma_{j}$, $z_{j,1}$, $z_{j,2}$ and $z_{j,3}$ and the
 corresponding edges. For reducing clutter, several vertices and edges have not 
 been shown in the figure.} \label{fig:red}
\end{figure}
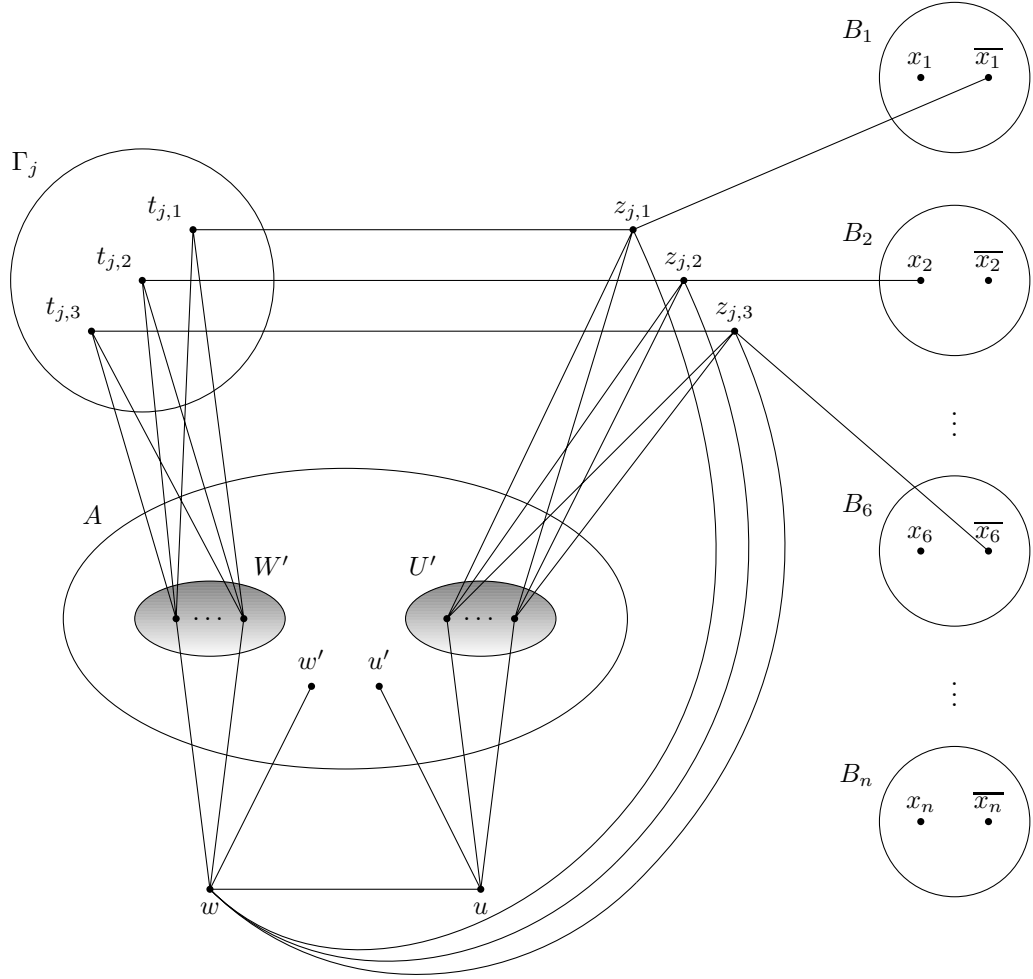
The description of the graph $G$ is complete. A pictorial representation is given in Figure \ref{fig:red}.
We now claim that the graph $G$ so constructed is $\qkicoloring{2k+i}{k}{0}$-colorable if and only if $\psi$ is satisfiable.

\medskip
\noindent\textbf{Colorability implies satisfiability:}
Suppose that $G$ is $\qkicoloring{2k+i}{k}{0}$-colorable. We will construct a satisfying assignment for $\psi$. 
Given a $\qkicoloring{2k+i}{k}{0}$-coloring of $G$, we can assume that the vertices of the set $A$ are colored according to the 
natural coloring of the Kneser graph $A \cup B_p$ for any $p$ using Theorem \ref{thm:natural}.
Using the vertices of $A$, the correspondence between the colors in the natural coloring and the given coloring is determined, 
except for the colors $2k$ and $2k+1$.

The vertex $u$ is adjacent to $u'$, which is colored $\{1, 2, \ldots, k\}$ and to the set $U'$ whose vertices are colored 
with the colors $\{1, 2, \ldots , k-1\} \cup \{2k+2, 2k+3, \ldots, 2k+i\}$. This leaves the colors $\{k+1, k+2, \ldots, 2k, 2k+1\}$
available
for coloring $u$. Similarly for $w$, because of adjacencies to $w'$ and $W'$, the available colors are 
$\{1, 2, \ldots, k-1\} \cup \{2k, 2k+1\}$. Further, because $uw$ is an edge, the colors $\{k+1, k+2, \ldots, 2k-1\}$ are 
fixed for $u$
and $\{1, 2, \ldots, k-1\}$ for $w$, 
with $u$ and $w$ being assigned one color each from $\{2k, 2k+1\}$.
Without loss of generality, we may assume that $u$ gets the color $2k$
and $w$ gets the color $2k + 1$. 
We will further refer to $2k$ as the `true color' and $2k+1$ as the `false color'.

{Given the way $A$ is colored, there are two ways in which each $B_p$ can be colored consistent with 
the coloring of $A$. In each $B_p$, the vertices $x_p$ can be colored $\{1, 2, \ldots, k-1, 2k\}$ and $\overline{x_p}$ 
be colored $\{1, 2, \ldots, k-1, 2k+1\}$ or vice versa. The coloring of the rest of the vertices in $B_p$ can be determined
from the coloring of $x_p$ and $\overline{x_p}$. 
The coloring of the vertices $x_p$ and $\overline{x_p}$ in the given coloring of $G$ will help us construct the 
satisfying assignment for the formula $\psi$.
If the vertex $x_p$ is colored $\{1, 2,\ldots,k-1, 2k\}$, then the literal $x_p$ is 
considered to be a true literal (and $\overline{x_p}$ false) since it has been assigned the `true color' $2k$. 
Similarly, if the vertex $x_p$ gets colored $\{1, 2,\ldots, k-1, 2k+1\}$, then $x_p$ is considered false. We will now 
see that the assignment thus obtained for each $x_p$ constitutes a satisfying assignment for $\psi$.}

Assume for the sake of contradiction that the assignment is not a satisfying assignment for $\psi$. 
This means that there is a clause $C_j =x_{j,1}\vee x_{j,2}\vee x_{j,3}$ that is not satisfied. 
Here each literal $x_{j,1}, x_{j,2}$ and $x_{j,3}$ correspond to some $x_p$ or $\overline{x_p}$. 
Since $C_j$ is not satisfied, all of $x_{j,1}, x_{j,2}$ and $x_{j,3}$ are false literals. This means 
that $\overline{x_{j,1}}, \overline{x_{j,2}}$ and $\overline{x_{j,3}}$ are all true literals and are all
colored with the set $\{1, 2,\ldots, k-1, 2k\}$. 

Now let us consider the vertices $z_{j, 1}, z_{j, 2}$ and $z_{j, 3}$. Because each of these vertices
is adjacent to $w$ and the set $U'$, the free colors available for $z_{j, 1}, z_{j, 2}$ and $z_{j, 3}$ are
$\{k, k+1, \ldots, 2k\}$. Since $z_{j, 1}, z_{j, 2}$ and $z_{j, 3}$ are adjacent to true literals, the color $2k$
is also ruled out, fixing the colors of $z_{j, 1}, z_{j, 2}$ and $z_{j, 3}$ to the set $\{k, k+1, \ldots, 2k-1\}$.

Now we focus on the vertices $t_{j, 1}, t_{j, 2}$ and $t_{j, 3}$. Because of its adjacencies to the set $W'$,
the free colors available to $t_{j, 1}, t_{j, 2}$ and $t_{j, 3}$ are 
$[k-1] \cup \{2k-1, 2k, 2k+1\}$.
Because each $t_{j, \ell}$ is adjacent to the corresponding $z_{j, \ell}$, the color $2k-1$ is ruled out, leaving
only the colors 
$[k-1] \cup \{2k, 2k+1\}$ 
for $t_{j, 1}, t_{j, 2}$ and $t_{j, 3}$. Recall that 
$t_{j, 1}, t_{j, 2}$ and $t_{j, 3}$ form a totally independent 3-set of the Kneser graph $\Gamma_j$.
Because of Theorem \ref{thm:natural}, the vertices $t_{j, 1}, t_{j, 2}$ and $t_{j, 3}$ together require $k+2$ colors,
since all of them share $k-1$ colors, with each getting a distinct $k$-th color. However, there are only $k+1$ free colors for 
$t_{j, 1}, t_{j, 2}$ and $t_{j, 3}$, namely the set 
$[k-1] \cup \{2k, 2k+1\}$. 
This is a contradiction 
to the assumption that there is a clause that is not satisfied. Hence the assignment is a satisfying assignment
for $\psi$.

\medskip
{\noindent\textbf{Satisfiability implies colorability:}
To prove the reverse direction, let us assume that $\psi$ is satisfiable. 
We will construct a $(2k+i, k, 0)$-coloring of $G$. 
We start with a satisfying assignment of $\psi$. 
For each $1 \leq p \leq n$, we color the vertices of the Kneser graph $A \cup B_p$ with their natural coloring.
If the variable $x_p$ is true in the satisfying assignment, we retain this coloring. If for some $p$, $x_p$ is assigned 
false, then for that $B_p$, we swap the colors $2k$ and $2k+1$ for all the vertices. Notice that this is still a valid coloring
of the Kneser graph $A\cup B_p$, because the vertices in $A$ will be unaffected by such a swap. For each variable $x_p$,
this coloring assigns 
the vertex $x_p$ with  $\{1, 2, \ldots, k-1, 2k\}$ and 
the vertex $\overline{x_p}$ with  $\{1, 2, \ldots, k-1, 2k+1\}$
 if $x_p$ is assigned true in the satisfying assignment, and vice versa if 
$x_p$ is assigned false. }

We assign vertex $u$  the colors $\{k+1, k+2,\ldots, 2k\}$  and $w$  the colors $\{1, 2, \ldots, k-1, 2k+1\}$.
Now let us consider the vertices $z_{j, 1}, z_{j, 2}$ and $z_{j, 3}$. If any of the $z_{j, \ell}$ is adjacent to a true literal
$x_p$ or $\overline{x_p}$, the vertex $z_{j, \ell}$ is assigned $\{k, k+1, \ldots, 2k-1\}$, since they are the only free colors
available. However, if any of the $z_{j, \ell}$ is adjacent to a false literal, then colors from $\{k, k+1, \ldots, 2k\}$ are 
available. 

Since we started with a satisfying assignment, every clause $C_j$ contains at least one true literal, WLOG say $x_{j, 1}$. 
The vertex $z_{j, 1}$ is adjacent to the vertex $\overline{x_{j, 1}}$ which is false. 
Now $z_{j, 1}$ can be assigned the set $\{k, k+1, \ldots, 2k-2, 2k\}$ while $z_{j,2}$ and $z_{j, 3}$ can be assigned 
the set $\{k, k+1, \ldots, 2k-1\}$.

The vertex $t_{j, 1}$ has the colors 
$[k-1] \cup \{2k-1, 2k+1\}$ 
available while the
vertices $t_{j, 2}$ and $t_{j, 3}$ have the colors 
$[k-1] \cup \{2k, 2k+1\}$ 
available for coloring.
We can assign the sets 
$\{1, 2, \ldots,k-1, 2k-1\}$, $\{1, 2, \ldots,k-1, 2k\}$ and $\{1, 2, \ldots,k-1, 2k+1\}$
to the vertices $t_{j, 1}, t_{j, 2}$ and $t_{j, 3}$ respectively. This is consistent with the coloring 
of a totally independent 3-set and hence can be extended to a $(k,0)$-coloring of $\Gamma_j$.

Now all the vertices have been assigned colors, giving a $(k, 0)$-coloring of $G$.

We have completed the proof of correctness of the reduction. We have already seen that 
the size of $G$ is polynomial
in the size of $\psi$. Thus we have shown that the $\qkicoloring{2k + i}{k}{0}$-coloring 
problem is NP-complete for all $k, i \geq 1$.
\qed

\medskip
\noindent \textbf{Acknowledgment:} The authors would like to thank the anonymous reviewer for helpful comments, and pointing
out a flaw in the proof of Theorem \ref{thm:npkminusone} in an earlier version of the paper.

\section*{References}

\bibliographystyle{elsarticle-num}
\biboptions{sort&compress}
\bibliography{BibFile}
\end{document}